\documentclass[journal]{IEEEtran}
\usepackage{amssymb}
%\usepackage{caption}
%\usepackage{stfloats}
%\captionsetup[figure]{name = {Fig.},labelsep=period}
\usepackage[english]{babel}
\usepackage{mathrsfs}
\newlength{\halfpagewidth}
\setlength{\halfpagewidth}{\linewidth}
\divide\halfpagewidth by 2

\newtheorem{corollary}{Corollary}

\newtheorem{theorem}{Theorem}
\newtheorem{lemma}{Lemma}
\newtheorem{remark}{Remark}

\newenvironment{proof}{{\indent \it {Proof}:\quad}}{\hfill $\square$\par}
%\newtheorem{corollary}{\textbf{Corollary}}

%\newtheorem{lemma}{\textbf{Lemma}}

%\newtheorem{theorem}{\textbf{Theorem}}

%\newtheorem{remark}{\textbf{Remark}}

% Useful packages
\usepackage{amsmath}
\usepackage{graphicx}
\usepackage[colorlinks=true, allcolors=blue]{hyperref}
\usepackage{color}
\usepackage{lettrine}

%\title{Fluid Antenna Multiple Access in\\Downlink Cell-free Networks}
\title{Cell-free Fluid Antenna Multiple Access Networks}
\author{Tianyu Han, Yongxu Zhu, Kai-Kit Wong,
Gan Zheng, and Hyundong Shin
\vspace{-6mm}

\thanks{T. Han and G. Zheng are with the School of Engineering, University of Warwick, Coventry, UK (e-mail: {tianyu.han, gan.zheng}@warwick.ac.uk).}
\thanks{Y. Zhu is with the National Communications Research Laboratory, Southeast University, Nanjing, China (e-mail: yongxu.zhu@seu.edu.cn).}
\thanks{K. Wong is affiliated with the Department of Electronic and Electrical Engineering, University College London, Torrington Place, UK and also with the Department of Electronic Engineering, Kyung Hee University, Yongin-si, Gyeonggi-do 17104, Republic of Korea (e-mail: kai-kit.wong@ucl.ac.uk)}
\thanks{H. Shin is with the Department of Electronics and Information Convergence Engineering, Kyung Hee University, Yongin-si, Gyeonggi-do 17104, Republic of Korea (e-mail: hshin@khu.ac.kr).}
\thanks{Corresponding author: Kai-Kit Wong, Yongxu Zhu}}

%\thanks{Corresponding author: Kai-Kit Wong.}}

\begin{document}
\maketitle
	
\begin{abstract}
Fluid antenna enables position reconfigurability that gives transceiver access to a high-resolution spatial signal and the ability to avoid interference through the ups and downs of fading channels. Previous studies investigated this fluid antenna multiple access (FAMA) approach in a single-cell setup only. In this paper, we consider a cell-free network architecture in which users are associated with the nearest base stations (BSs) and all users share the same physical channel. Each BS has multiple fixed antennas that employ maximum ratio transmission (MRT) to beam to its associated users while each user relies on its fluid antenna system (FAS) on one radio frequency (RF) chain to overcome the inter-user interference. Our aim is to analyze the outage probability performance of such cell-free FAMA network when both large- and small-scale fading effects are considered. To do so, we derive the distribution of the received \textcolor{black}{magnitude} for a typical user and then the interference distribution under both fast and slow port switching techniques. The outage probability is finally obtained in integral form in each case. Numerical results demonstrate that in an interference-limited situation, although fast port switching is typically understood as the superior method for FAMA, slow port switching emerges as a more effective solution when there is a large antenna array at the BS. Moreover, it is revealed that FAS at each user can serve to greatly reduce the burden of BS in terms of both antenna costs and CSI estimation overhead, thereby enhancing the scalability of cell-free networks.
\end{abstract}
	
\begin{IEEEkeywords}
Cell-free network, fluid antenna multiple access, fluid antenna system, interference-limited, outage probability.
\end{IEEEkeywords}
	
\section{Introduction}
\subsection{Motivation}
\IEEEPARstart{A}{n antenna} array at a base station (BS) has now become a default component for recent-generation wireless communications systems, with precoding essential to raise the spectral efficiency to the levels that are required by emerging applications \cite{Wong-mmimo2023,Zero_Forcing}. The fifth-generation (5G) currently uses $64$ antennas at the BS to perform codebook-based precoding to serve up to $12$ users on the same physical channel \cite{Villalonga-2022}. This was motivated by the massive multiple-input multiple-output (MIMO) concept, suggesting that interference can be averaged out in the extreme limit in the number of antennas \cite{Marzetta-2010}. In other words, a maximum ratio transmission (MRT) precoding would be enough to handle the inter-user interference \cite{MRT}. The strong interest in massive MIMO continues beyond 5G and the sixth generation (6G) is contemplating the deployment of extra-large MIMO (XL-MIMO) \cite{Wang-xlmimo}. However, the additional complexity of many more radio frequency (RF) chains and the explosive growth in power consumption are major obstacles \cite{Gustavsson-2021}, not to mention the overheads for channel estimation \cite{Yin-2013,He-2018}.
	
\subsection{FAS and Literature Review}
Motivated by the above, it is therefore important to explore ways that can reduce the burden of BS. An obvious choice is to strengthen the mobile users but mobile devices such as handset or tablet are normally limited in space and more constrained in terms of cost. An emerging reconfigurable antenna technology known as fluid antenna system (FAS) overcomes these issues by enabling shape and position flexibility in antenna \cite{Lee_fluid_Antenna}. FAS allows more diversity to be obtained in limited space and was first introduced to wireless communications by Wong {\em et al.}~in \cite{LCR,Fluid_antenna_system}. Since then, efforts have been made to ensure that the spatial correlation in the FAS channels is accurately captured \cite{wong2022closed_correlationn,khammassi_channel_model,block_correlation_model}. The diversity of a FAS receiver was analyzed in \cite{10130117} while \cite{10303274} even considered using FAS at both ends and studied the diversity-multiplexing tradeoff. Continuous FAS was also recently addressed in \cite{continuous_LCR}. In \cite{xu2023capacity}, it was found that FAS-equipped users could lift the uplink performance.

FAS can also combine with advanced systems such as non-orthogonal multiple access (NOMA) to improve communication performance \cite{FA_OMA_NOMA,OPtimization_OMA_NOMA}. Additionally, the use of FAS is not limited to mobile users and can be deployed at the BS to increase its degree-of-freedom (DoF) for various performance enhancement. In \cite{ISAC_FAS}, MIMO-FAS was adopted at the BS for integrated sensing and communication (ISAC). It is apparent that artificial intelligence (AI) methods will play an important role in the estimation and optimization of FAS channels \cite{Wang-aifas2024}. Channel state information (CSI) is essential to the operations of FAS and CSI estimation has recently been tackled by using methods such as sparse signal processing \cite{FA_TDMA,Dai-2023,10375559}.
	
It is important to understand that FAS is a conceptual idea not limited to any specific implementation mechanisms. It can be realized using a variety of methods such as reconfigurable pixels \cite{Rodrigo-2014,Jing-2022}, metamaterials \cite{Hoang-2021,Deng-2023}, and mechanically movable structures \cite{8060521}, etc. A recent article in \cite{new2024tutorial} provides a comprehensive tutorial on FAS covering a range of topics. Experimental results on FAS have also appeared in \cite{Shen-tap_submit2024} and \cite{Zhang-pFAS2024}. A branch of FAS, known as movable antennas, i.e., a FAS that employs mechanically movable antennas for position flexibility, also has drawn much attention \cite{zhu2024historical}. Recent works on this front tended to focus on the joint optimization that involves transmit power, precoding vector, and/or continuous antenna position that improves the rate performance under line-of-sight dominant channel conditions \cite{MA_Multiuser,MA_Multi_beam,MA_Near_Field}.

While it is apparent that FAS can be understood as a new DoF for enhancing the performance of wireless networks, our emphasis in this paper is on the possibility to improve network scalability for multiple access. Remarkably, FAS facilitates a new approach to multiple access, referred to as fluid antenna multiple access (FAMA) \cite{fast_FAMA}. With a high-fidelity signal in the spatial domain accessible by FAS, figuratively speaking, a mobile user can `ride' on the ups and downs of the fading channel and position itself to receive the signal under the most desirable condition. In \cite{fast_FAMA,Wong-ffama-comml}, it was proposed to change the position on a per-symbol basis, so that the data-dependent sum-interference plus noise signal is the weakest. This method is referred to as fast FAMA ($f$-FAMA). Later in \cite{slow_FAMA}, the slow FAMA ($s$-FAMA) scheme was studied, which was motivated by its simplicity in position (or port) switching. In $s$-FAMA, the port of FAS is only switched when the channel, not data, changes. The beauty of FAMA is that under the condition of rich scattering, precoding at the BS is not required and each user handles its interference on its own independently on one RF chain, and without interference cancellation. For line-of-sight (LoS) dominant channel conditions, the performance of FAMA can evidently be very different. In \cite[Section V-E]{new2024tutorial}, it has been demonstrated that FAMA can still be very effective when combining with simple precoding at the BS under LoS-dominant channel conditions. In particular, it is possible to reduce the CSI requirement of the BS to knowing statistical CSI only, yet still enabling effective spatial multiplexing when the users employ FAMA techniques. Moreover, opportunistic scheduling was reported to be effective in greatly enhancing the interference immunity of FAMA in \cite{wong2023opportunistic}. Also, a distributed team-inspired deep reinforcement learning approach was proposed in \cite{Waqar-2024} to realize opportunistic FAMA under dynamic environments where users can self-optimize their decisions.

The central idea in FAMA is to select the `best' receiving port to avoid interference, which requires the CSI of all the ports. Current approaches attempt to reconstruct the CSI across all the ports from the CSI of a few estimated ports \cite{Port_Selection}. By contrast, it is a lot more challenging to identify the best port in $f$-FAMA which requires the knowledge of the instantaneous signal-to-interference ratio (SIR) at all the ports, a problem that was studied in \cite{Wong-ffama-comml}. Moreover, machine learning methods solved the CSI estimation problem for $s$-FAMA in \cite{DL_s_FAMA,cGAN_s_FAMA}. Later, \cite{Generalized_CSI} proposed a generalized CSI estimation approach by using an asymmetric graph-masked autoencoder. More in-depth discussion on FAMA and its CSI requirements is given in \cite{new2024tutorial} and a recent survey for FAMA appears in \cite{Rabie-2024}.

Despite anticipating promising results in FAS and FAMA, a major difficulty in the required performance analysis is to deal with the spatial correlation among the ports, which makes it extremely challenging to derive the joint probability density function (PDF) and cumulative density function (CDF) for both the desired signal and interference signal. Even if the joint PDF and CDF are available, performance evaluation typically involves layers of integration, disallowing useful insights to be drawn. Fortunately, approximate models can be employed to permit closed-form analytical results, as achieved in, e.g., \cite{fast_FAMA,slow_FAMA}, and there is also an enhancement technique that ensures the accuracy of approximate models \cite{block_correlation_model}.

Before ending this literature review, it is worth clarifying the difference of FAMA from the notion of conventional antenna selection. Historically, antenna selection refers to the situation where fixed antennas with sufficient spacing between them are deployed and the best one is selected for reception. Normally, at least a half-wavelength distance is expected between two adjacent antennas to ensure signal independence for diversity benefits. Closely spaced antennas are never advised as correlation is deemed undesirable. This is the reason why antenna selection was never considered for multiple access because it would have required an extremely large number of antennas (and hence a gigantic space, recalling the requirement of half-wavelength antenna spacing) to be possible. On the contrary, FAMA thrives even under spatially correlated channels and it requires only {\em one} antenna that is able to access the received signals in a prescribed space. In FAMA, it is the multiuser diversity (i.e., signal independence {\em between users}) which is at play here, and the spacing between the FAS ports at a given user plays a different role in this objective. Specifically, FAMA performs better if the spatial resolution of FAS is higher for a given size, meaning that correlation helps. In other words, a clear distinction is that FAS manages to effectively utilize the spaces left between the fixed antennas in the conventional antenna selection system that has been overlooked before. This also comes as a timely revolution when such reconfigurable antennas for position flexibility become available.

\subsection{Aim and Contributions}
The emergence of FAMA offers an important alternative to the multiuser MIMO approach when CSI at the transmitter side is impossible, as is typical in device-to-device communication scenarios. That being said, multiuser MIMO is the pinnacle of the physical layer in cellular networks. In this case, the BS is required to possess the CSI of the user it is serving but for complexity reasons, it is preferred that the BS does not know the CSI of other users who may experience interference from its downlink transmission. Consequently, MRT precoding is a more scalable solution than methods such as zero-forcing that require full knowledge of the CSI of all other users. It would be interesting to find out if the use of FAS at each user can ease the burden of BS. Despite the valuable insights offered by previous work on FAS and FAMA, it is not at all understood how well FAMA could work with the multiuser MIMO setup in cellular networks, which is the goal of this paper. 

While our interest is to analyze the performance of FAMA with multiuser MIMO in cellular networks, we recognize that a cell-free setup is increasingly relevant due to its high spectral efficiency \cite{Ngo-2017}. As a result, our study considers the multi-cell model in which the `cell' structure is only here to specify the coverage of a BS and its association to the user, and all users share the same time-frequency physical channel. Each BS has multiple fixed antennas using MRT precoding to transmit to one serving user which is equipped with FAS utilizing FAMA to help alleviate the inter-user interference. According to the terminology in \cite{new2024tutorial}, this is a Rx-MISO-FAS model in the cell-free architecture. Different from the previous work, our work considers the use of MRT in the downlink together with FAMA and that both large- and small-scale fading are considered. 

Our objective is to substantiate the synergy between MRT precoding at the BSs and FAMA at the users in cell-free\footnote{In theory, a cell-free network will use all the antennas at all the BSs to serve a user. However, in practice, due to complexity reasons, only a few BSs can perform coordinated beamforming to serve a shared user. In our model, a BS performing MRT may be interpreted as a few coordinated BSs.} networks. More precisely, in a traditional cell-free network, to manage inter-user interference effectively, all BSs are required to obtain the CSI for all users. This comprehensive CSI knowledge enables the use of interference mitigation techniques such as zero-forcing. Unfortunately, the high CSI requirements and the need for extensive cooperation among BSs impose significant constraints on scalability, especially in large-scale deployments. We aim to investigate whether the use of FAS on the user side can alleviate the CSI burden on BSs, thereby enhancing the scalability of the cell-free network structure.
	
In summary, we have made the following contributions:
\begin{itemize}
\item Considering both Rayleigh fading and large-scale fading for cell-free FAMA networks with MRT precoding, we derive the exact joint distribution of the desired signal and interference at the user in the case for $f$-FAMA, while an approximate distribution of the interference is derived in the case of $s$-FAMA. Our derivation method leverages the correlation among ports to directly capture the joint PDF, without relying on the conditional PDF to ensure independence. This approach bypasses the generalized chi-squared distribution, which is very difficult to handle, allowing both the joint PDF and CDF to be derived, and making the performance analysis more tractable.
\item Using the distributions, we obtain two integral-form expressions for the outage probability of MRT-$f$-FAMA and MRT-$s$-FAMA, respectively, and provide an alternative form for each to facilitate numerical evaluation.
\item Our simulation results reveal that with a sufficient number of fixed antennas at the BS for MRT precoding, $s$-FAMA can outperform $f$-FAMA, which is not known possible before. This finding shows that $s$-FAMA is more desirable than $f$-FAMA not only because of its practical simplicity but also its superior performance when MRT precoding is used. Additionally, users equipped with FAS can greatly improve outage probability performance, and effectively reduce the burden of BS in its number of antennas.
\end{itemize}

The remainder of this paper is organized as follows. Section \ref{sec:model} introduces the FAMA network model with MRT precoding in the multi-cell setup. Then Section \ref{sec:analysis} presents the main analytical results and obtains the outage probability expressions. In Section \ref{sec:results}, numerical results are provided and finally, we provide some concluding remarks in Section \ref{sec:conclude}.

For the rest of this paper, the PDF for Gamma and Nakagami distributions will be used frequently. Thus, we find it useful to provide them below:
\begin{align}
{\rm Gamma}(\tau;\alpha,\beta) &= \frac{\tau^{\alpha-1}e^{-\beta \tau}\beta^{\alpha}}{\Gamma(\alpha)},\label{Gamma_PDF}\\
{\rm Nakagami}(\tau;\alpha,\beta) &= \frac{2{\alpha}^{\alpha}\tau^{2\alpha-1}e^{-\frac{\alpha \tau^2}{\beta}}}{\Gamma(\alpha)\beta^{\alpha}},\label{Nakagami_PDF}
\end{align}
where $\Gamma(\cdot)$ denotes the gamma function, and $\alpha$ and $\beta$ are the corresponding parameters while $\tau$ is the random variable.

\section{System Model}\label{sec:model}
We consider a downlink cell-free network with $U+1$ BSs, each serving a corresponding cell centered around it. Each BS is equipped with $N$ fixed antennas. Users are equipped with a single fluid antenna, served by the nearest BS. Considering a time-division multiple-access (TDMA) setup, within each time slot, each BS supports only one user within its coverage area, and all the BSs share the same time-frequency physical channel. Hence, each user is subjected to interference coming from the other $U$ BSs. Moreover, each BS is assumed to have the CSI to perform MRT precoding to serve its associated user. The concept behind our proposed system is to employ MRT beamforming at each BS to enhance signal power, while interference is mitigated at the receiver using FAS.\footnote{Note that the consideration of MRT precoding in this work is not because it is rate optimal but because it is more scalable than other methods such as zero-forcing in terms of the CSI requirement at the BS. This is particularly relevant in the cell-free network structure considered in this work. If zero-forcing were to be used in our model, it would have required each BS to know the CSI of all the other users in the entire network, which is practically not possible. Therefore, in our work, the more feasible way to handle the inter-user interference is by the FAMA techniques at the FAS-equipped users. The MRT precoding at the BS on the other hand is used to enhance the channel gain of the desired user, ignoring the interference it is going to cause to other users and leaving it to be dealt with FAS at each user.}

At each user, the FAS can be switched instantly to one of the $K$ preset locations (a.k.a.~port) evenly distributed along a linear dimension of length, $W\lambda$, where $W$ is a scaling factor and $\lambda$ is the carrier wavelength. Delay-free port switching can be easily achieved by adopting the reconfigurable pixels \cite{Zhang-pFAS2024} and metamaterial approaches \cite{Deng-2023}. Additionally, the results of this paper can be easily extended to the case with a two-dimensional FAS surface at each user but we opt for the linear FAS to simplify our discussion. Fig.~\ref{fig: MISO_FAS_Network} depicts our model. 

\begin{figure*}[]
\centering
%\captionsetup{justification=justified}
\includegraphics[width=0.7\linewidth]{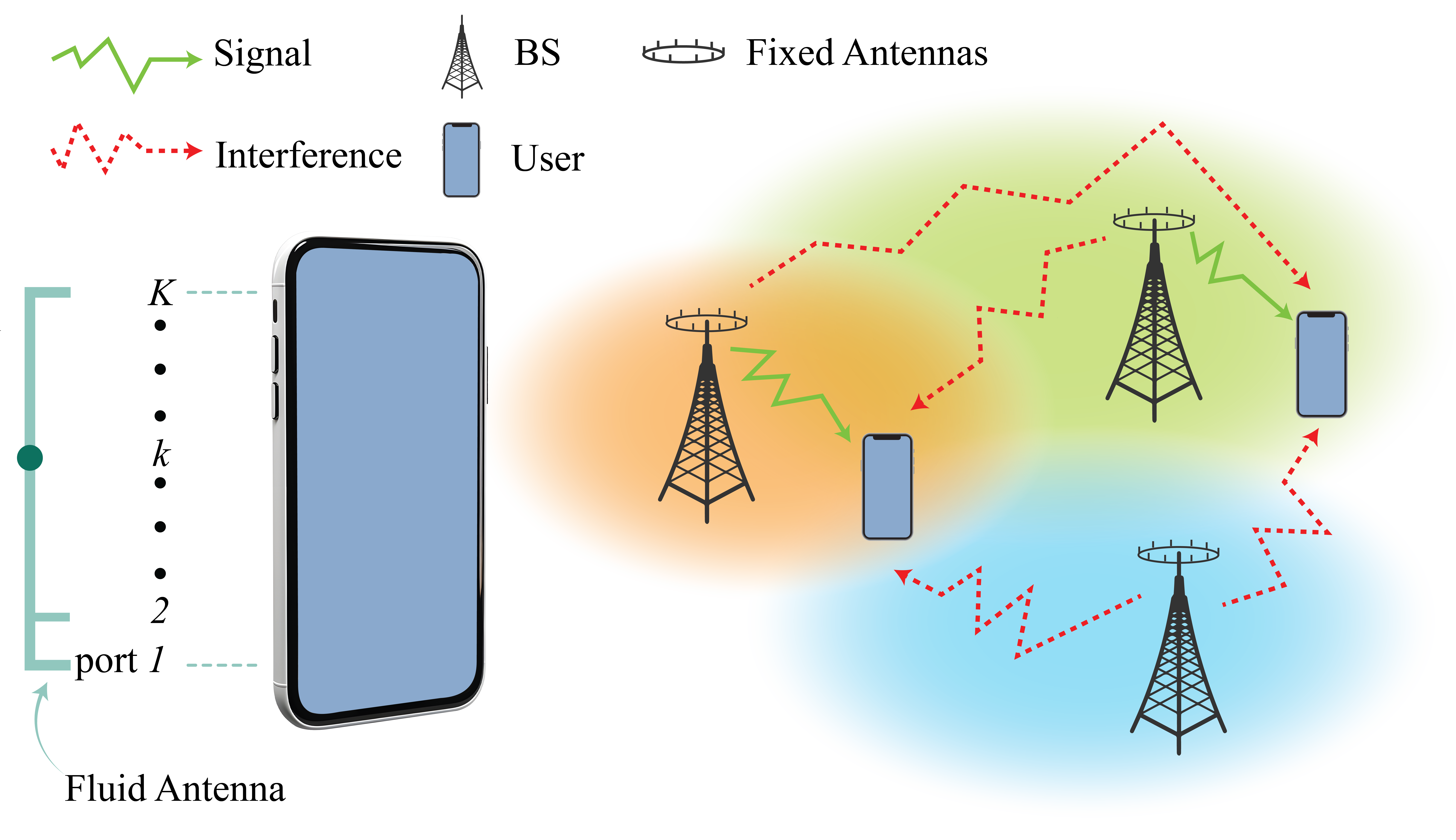}
\caption{The concept of a cell-free FAMA network (i.e., a multi-cell Rx-MISO-FAS system) where each user equipped with a single fluid antenna is served by its nearest BS and interference comes from the other BS transmitters.}\label{fig: MISO_FAS_Network}
\end{figure*} 

\subsection{Channel Model}
To describe our model, we consider an average user with the user index omitted for notational convenience. For small-scale fading between the $n$-th fixed antenna of the nearest BS and the $k$-th port of the fluid antenna at this target user, the complex channel can be modelled as \cite{wong2022closed_correlationn,rician_rv}
\begin{multline}\label{correlated_channel}
h_{n,k}=\sigma \left(\sqrt{1-\mu^2} x_{n,k} + \mu x_{n,0} \right)\\
+j\sigma \left(\sqrt{1-\mu^2} y_{n,k} + \mu y_{n,0}\right),%~\mbox{for }k = 1,\dots ,K,
\end{multline}
where $x_{n,0},x_{n,1},\dots, x_{n,k}$ and $y_{n,0},y_{n,1},\dots, y_{n,k}$ are independent real-valued Gaussian random variables with zero mean and variance of $\frac{1}{2}$. Consequently, $h_{n,k}$ follows a complex normal distribution as $\mathcal{CN}(0, \sigma^2)$ while the magnitude $\left | h_{n,k} \right |$ follows a Rayleigh distribution \cite{rician_rv}. Further, we can define $h_{n,0}\triangleq \sigma (x_{n,0}+j y_{n,0})$ as the common random variable that links the random channels at the ports with the correlation parameter $\mu$ that can be chosen as \cite{wong2022closed_correlationn}
\begin{align} \label{Correlation_Parameter}
\mu ^2={\left|\frac{2}{K(K-1)}\sum _{k=1}^{K-1}(K-k)J_0{\left(\frac{2\pi kW}{K-1}\right)}\right|}, 
\end{align}
where $J_0(\cdot)$ is the zero-order Bessel function of the first kind. 

Moreover, we also consider the large-scale channel effects in terms of the path loss. With MRT precoding at the BS and assuming the distance $r_0$, the overall channel becomes
\begin{align}\label{MRT_signal_port_k}
g_{k} &= r_{0}^{-\alpha}\frac{\mathbf{h}_k\mathbf{h}_{k}^{\dagger}}{\left\| \mathbf{h}_k\right\|}\nonumber \\ 
&=\sqrt{\sum_{n=1}^{N}r_{0}^{-\alpha}\left |  h_{n,k} \right |^2}, 
\end{align}
where $ \mathbf{h}_k = \left[h_{1,k}, h_{2,k}, \dots, h_{N,k}\right]$, $(\cdot)^{\dagger}$ denotes the conjugate transpose, $\|\cdot\|$ represents the Euclidean norm, and $\alpha$ denotes the path loss exponent. Note that MRT is aimed at countering the small-scale fading only and serves to co-phase the signals arriving at the user's ports. For clarity, we also find it useful to define $g_0\triangleq\sqrt{\sum_{n=1}^{N}r_{0}^{-\alpha}|h_{n,0}|^2}$.

Then following \cite[Remark 1]{MRT_interference} and \cite[(29) \& (30)]{MRC_interference}, the interference channel at the $k$-th port of the user of interest with MRT precoding can be modeled as
\begin{align}\label{interference_channel}
g_{k}^{(i)} =  r_{i}^{-\frac{\alpha}{2}}h_{k}^{(i)},
\end{align} 
in which $r_{i}$ denotes the distance between the $i$-th interfering BS and the user, and $h_{k}^{(i)}$ is the small-scale fading following a complex Gaussian distribution as $\mathcal{CN}(0, \sigma^2)$. For conciseness, we define the distance vector as ${\bold r} = [r_{0}, r_{i}, \dots, r_{U}]$ with its elements sorted in an increasing order.

\subsection{Signal Model with $f$-FAMA}
Under this cell-free FAMA model, the received signal at the $k$-th FAS port of the user of interest is given by
\begin{align}\label{signal_with_interference}
z_{k} = sg_{k} +  \sum_{i=1}^{U}s_{i}g_{k}^{(i)} + \eta_{k} \equiv sg_{k} +g_{k}^{\rm I}+ \eta_{k},
\end{align}
where $s$ denotes the information-bearing symbol for the desired user, $s_i$ is the symbol intended for another user from the $i$-th interfering BS, and $\eta_k$ represents the zero-mean complex additive white Gaussian noise with variance of $\sigma_\eta^2$. The power of the information symbols from all the BSs is assumed to be the same and ${\rm E}[|s]^2]={\rm E}[|s_{i}|^{2}] = \sigma_{s}^2$, $\forall i$. 

In this paper, our focus is on the interference-limited scenario where $\sigma_s^2\gg \sigma_\eta^2$. In the sequel, we will ignore the noise term when developing our analytical results. To avoid the inter-user interference, it was proposed in \cite{fast_FAMA} that the user finds the ratio of the instantaneous desired signal to the interference signal at each port $k$, i.e.,
\begin{align}\label{eq:sir-fast}
\mathrm{SIR}_{k}^{{\rm [f]}}=\frac{\left|g_{k}\right|^2|s|^2}{\left|g_{k}^{\rm I}\right|^2},
\end{align}
where $\left|g_{k}^{\rm I}\right|^2$ denotes the instantaneous (data-dependent) sum-interference energy. For convenience, we call (\ref{eq:sir-fast}) the {\em instantaneous} signal-to-interference ratio (SIR).

Specifically, the user can select the port that maximizes the instantaneous SIR so that 
\begin{equation}\label{eq:c-ffana}
\left|g_{f\text{-}{\rm FAMA}}\right|=\max_k \frac{\left|g_k\right|}{\left|g_k^{\rm I}\right|},
\end{equation}
in which the symbol $s$ for the user concerned is not included in the maximization because it is not dependent upon $k$. For techniques to estimate the ratios in the above maximization, readers are referred to \cite{Wong-ffama-comml}. The expression \eqref{eq:c-ffana} indicates that the port maximizing the instantaneous SIR among $K$ ports is selected, an approach referred to as $f$-FAMA. The overall network performance can be understood by considering the outage probability
\begin{align}\label{OUT_f_FAMA}
{\mathcal O}_{{\rm MRT}\text{-}f\text{-}{\rm FAMA}} &= \mathbb{P}\left\{\sigma_{s}^2\left|g_{f\text{-}{\rm FAMA}}\right|^2 <\gamma_{\rm th}^{{\rm [f]}}\right\} \notag\\
&= \mathbb{P}\left\{\left|g_{f\text{-}{\rm FAMA}}\right|<\Theta_{\rm f}\right\},
\end{align}
where $\Theta_{\rm f}\triangleq \sqrt{\frac{\gamma_{\rm th}^{{\rm [f]}}}{\sigma_{s}^2}}$ and $\gamma_{\rm th}^{{\rm [f]}}$ is the prescribed SIR threshold. Note that the above outage probability formulation treats (\ref{eq:c-ffana}) as the random resulting channel after MRT precoding and $f$-FAMA port selection and then computes the SIR by assuming $|s|^2=\sigma_s^2$ before comparing it with the SIR threshold, $\gamma_{\rm th}^{{\rm [f]}}$. This is valid for constant-modulus modulation schemes.

\subsection{Signal Model with $s$-FAMA}
While \cite{Wong-ffama-comml} shows that the ratio (\ref{eq:sir-fast}) can actually be estimated on a per-symbol basis for the maximization (\ref{eq:c-ffana}), the estimation accuracy needs to be much improved or the performance of $f$-FAMA could degrade a lot. With the signal model in (\ref{signal_with_interference}), it is possible to select the port based on a simpler criterion. In \cite{slow_FAMA}, the {\em average} SIR is considered instead, given by
\begin{align}\label{SIR_SLow}
\mathrm{SIR}_{k}^{\rm [s]} =\frac{{\rm E}\left[\left|s g_k\right|^2\right]}{{\rm E}\left[\left|g_{k}^{\rm I}\right|^2\right]}=\frac{\left|g_k\right|^2}{\left|g_{k}^{\rm [s]} \right|^2},
\end{align}
where
\begin{align}
\left|g_{k}^{\rm [s]} \right|^2 = \sum_{i=1}^{U}\left |r_{i}^{-\frac{\alpha}{2}} h_{k}^{(i)} \right |^2.
\end{align}
Note that (\ref{SIR_SLow}) only depends on the instantaneous CSI but not the data. For slow fading channels, it is practically achievable. 

In this case, port selection can be done by
\begin{align}
\left|g_{s\text{-}{\rm FAMA}}\right|=\max_k \frac{\left|g_k\right|}{\left|g_k^{\rm [s]} \right|}.
\end{align}
As such, in the $s$-FAMA approach \cite{slow_FAMA}, the user only needs to switch the port once during each channel coherence time. The outage probability for this network is found as
\begin{align}\label{OUT_s_FAMA}
{\mathcal O}_{{\rm MRT}\text{-}s\text{-}{\rm FAMA}}=\mathbb{P}\left\{\left|g_{s\text{-}{\rm FAMA}}\right| < \Theta_{\rm s} \right\},
\end{align}
where $\Theta_{\rm s} \triangleq\sqrt{\gamma_{\rm th}^{\rm [s]}}$ and $\gamma_{\rm th}^{\rm [s]}$ is the SIR threshold.

\section{Main Results}\label{sec:analysis}
In this section, we present our principal results that evaluate the performance of both MRT-$f$-FAMA and MRT-$s$-FAMA. We first represent the distributions of the received signal and the interference with fast and slow port switching, respectively. Afterwards, we provide integral-form expressions for the outage probability for both systems, along with an alternative form for numerical computation for each. Our analysis focuses on an average user so it represents the average performance of any user in the FAMA network using MRT precoding.

\subsection{Signal Distribution at FAS}
The following theorem characterizes the distribution of the received signal after MRT precoding at the $k$-th port of the fluid antenna and the correlation among the ports.

\begin{theorem}\label{g_k_Nakagami_RV}
The PDF of the precoded channel of the desired signal at the $k$-th port, $\left|g_{k}\right|$, is given by
\begin{align} \label{PDF_gk_Nakagami}
f_{\left|g_{k}\right|}(\tau_{k}) = {\rm Nakagami}(\tau_{k}; \omega, \iota),
\end{align}
where 
\begin{equation}\label{eq:omega}
\left\{\begin{aligned}
\omega &= N,\\
\iota&= r_{0}^{-\alpha}N\sigma^2,
\end{aligned}\right.
\end{equation}
and ${\rm Nakagami}(\cdot,\cdot)$ represents the Nakagami distribution given in \eqref{Nakagami_PDF}. Furthermore, the correlation parameter between the channels $\left|g_{0}\right|$ and $\left|g_{k}\right|$ is found as $\rho_{\left|g_{0}\right|,\left|g_{k}\right|} = \mu^2$.
\end{theorem}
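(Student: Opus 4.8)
The plan is to prove the two assertions of Theorem~\ref{g_k_Nakagami_RV} in turn: first the marginal law of $\left|g_{k}\right|$, then the port-to-port correlation.

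For the marginal law, I would start from the closed form $\left|g_{k}\right| = g_{k} = \sqrt{r_{0}^{-\alpha}\sum_{n=1}^{N}\left|h_{n,k}\right|^{2}}$ in \eqref{MRT_signal_port_k}, noting that $g_{k}\ge 0$ so $\left|g_{k}\right|=g_{k}$. The key structural observation is that, for a \emph{fixed} port $k$, the channels $\{h_{n,k}\}_{n=1}^{N}$ are mutually independent: by \eqref{correlated_channel} each $h_{n,k}$ is a function only of the block $\{x_{n,k},y_{n,k},x_{n,0},y_{n,0}\}$, and these blocks are disjoint across $n$ because the common term $h_{n,0}$ carries the antenna index $n$. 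Since $h_{n,k}\sim\mathcal{CN}(0,\sigma^{2})$, each $\left|h_{n,k}\right|^{2}$ is exponential with mean $\sigma^{2}$, i.e.\ ${\rm Gamma}(\,\cdot\,;1,1/\sigma^{2})$ in the notation of \eqref{Gamma_PDF}. Summing $N$ i.i.d.\ copies gives a ${\rm Gamma}(\,\cdot\,;N,1/\sigma^{2})$ law, and multiplying by the deterministic factor $r_{0}^{-\alpha}$ yields $g_{k}^{2}\sim{\rm Gamma}(\,\cdot\,;N,1/(r_{0}^{-\alpha}\sigma^{2}))$. A final change of variable $\tau_{k}=\sqrt{v}$ with Jacobian $2\tau_{k}$, that is $f_{\left|g_{k}\right|}(\tau_{k})=2\tau_{k}\,f_{g_{k}^{2}}(\tau_{k}^{2})$, converts this Gamma density into the Nakagami form \eqref{Nakagami_PDF}; matching the exponent and the power of $\tau_{k}$ identifies the shape as $\omega=N$ and the spread as $\iota={\rm E}[g_{k}^{2}]=N r_{0}^{-\alpha}\sigma^{2}$, which is exactly \eqref{eq:omega}.

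For the correlation statement, I would argue at the level of the powers $g_{0}^{2}$ and $g_{k}^{2}$, since the "correlation parameter" of a bivariate Nakagami (equivalently bivariate Gamma) pair is conventionally the Pearson correlation of the underlying squared envelopes, which is also the quantity that enters the joint-distribution derivations in the later subsections. Writing $h_{n,k}=\sqrt{1-\mu^{2}}\,\tilde h_{n,k}+\mu\,h_{n,0}$ as implied by \eqref{correlated_channel}, with $\tilde h_{n,k}\sim\mathcal{CN}(0,\sigma^{2})$ independent of $h_{n,0}$, I would expand $\left|h_{n,k}\right|^{2}$ and compute ${\rm E}\big[\left|h_{n,0}\right|^{2}\left|h_{n,k}\right|^{2}\big]$ using second- and fourth-order moments of circularly symmetric complex Gaussians (the cross term vanishes because $\tilde h_{n,k}$ is zero-mean and independent of $h_{n,0}$). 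This gives ${\rm Cov}\big(\left|h_{n,0}\right|^{2},\left|h_{n,k}\right|^{2}\big)=\mu^{2}\sigma^{4}$ and ${\rm Var}\big(\left|h_{n,0}\right|^{2}\big)={\rm Var}\big(\left|h_{n,k}\right|^{2}\big)=\sigma^{4}$. Because the per-antenna pairs are independent across $n$, summing and rescaling by $r_{0}^{-\alpha}$ preserves the correlation coefficient: ${\rm Cov}(g_{0}^{2},g_{k}^{2})=N r_{0}^{-2\alpha}\sigma^{4}\mu^{2}$ while ${\rm Var}(g_{0}^{2})={\rm Var}(g_{k}^{2})=N r_{0}^{-2\alpha}\sigma^{4}$, so $\rho_{\left|g_{0}\right|,\left|g_{k}\right|}=\mu^{2}$.

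I expect the obstacles here to be bookkeeping rather than conceptual. The first delicate point is to justify, directly from \eqref{correlated_channel}, the independence of $\{h_{n,k}\}_{n}$ for a fixed $k$ — this rests entirely on the common term being indexed by the antenna $n$, and it is worth stating explicitly rather than glossing over. The second is the fourth-moment evaluation ${\rm E}\big[\left|h_{n,0}\right|^{4}\big]=2\sigma^{4}$ (equivalently using $\mathcal{CN}$ Wick/Isserlis identities) needed in the covariance step; a sign or factor slip there is the easiest way to get the wrong answer. Finally, I would add a one-line remark fixing the convention that the stated correlation parameter is the squared-envelope (power) correlation entering the bivariate Nakagami model, since the Pearson correlation of the envelopes themselves differs from $\mu^{2}$ by the usual hypergeometric correction and is not what the theorem asserts.
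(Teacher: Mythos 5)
Your proposal is correct and follows essentially the same route as the paper's Appendix A: the sum of the $N$ independent exponential squared magnitudes gives a Gamma law, scaling by $r_0^{-\alpha}$ and a square-root change of variables give the Nakagami form with $\omega=N$ and $\iota=r_0^{-\alpha}N\sigma^2$, and the correlation parameter is obtained as the Pearson correlation of the squared envelopes, which equals $\mu^2$. The only differences are organizational rather than substantive — you compute the covariance per antenna before summing while the paper writes the moments of the sums $\left|g_0\right|^2\left|g_k\right|^2$ directly and invokes its cited reference for the envelope-level statement — and your explicit remark that the stated parameter is the power correlation (not the envelope Pearson correlation) makes precise a convention the paper delegates to that citation.
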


\begin{proof}
See Appendix \ref{appendix:Proof_g_k_Nakagami_RV}.
\end{proof}

With the distribution and correlation parameters in Theorem \ref{g_k_Nakagami_RV}, the following theorem presents the joint PDF of the received MRT signal in the FAS across all ports.

\begin{theorem}
The joint PDF of the MRT precoded channels among the $K$ ports $\left|g_{1}\right|, \left|g_{2}\right|, \dots, \left|g_{K}\right|$ is given by
\begin{multline}\label{joint_PDF_Nakagami}
{f_{\left| {{g_1}} \right|, \dots ,\left| {{g_K}} \right| }}\left( {{\tau_1}, \dots ,{\tau_K}} \right)\\
=\int_{0}^{\infty} \frac{2\omega^{\omega}\tau_{0}^{2\omega-1}e^{-\frac{\omega}{\iota}\tau_{0}^2}}{\Gamma(\omega)\iota^{\omega}}\\
\times\prod\limits_{k = 1}^K {\frac{{2\omega \tau_0^{1 - \omega}\tau_k^\omega{e^{-\frac{\mu^2\omega}{{1 - \mu^2}}\frac{{\tau_0^2}}{{\iota}}}}{e^{ - \frac{\omega}{{1 - {\mu}^2}}\frac{{\tau_k^2}}{{\iota}}}}}}{{\iota\left( {1 - \mu^2} \right){ \mu ^{\omega - 1}}}}} \\
\times {I_{\omega - 1}}\left[ {\frac{{2\omega{\mu}{\tau_0}{\tau_k}}}{{\left( {1 - \mu^2} \right)\iota}}} \right] d\tau_{0},
\end{multline}
where $I_{\omega-1}(\cdot)$ is the modified Bessel function of the first kind.
\end{theorem}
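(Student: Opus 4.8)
The plan is to exploit the common-anchor structure of the port channels in \eqref{correlated_channel}: for every $n$, all of $h_{n,1},\dots,h_{n,K}$ are built from the \emph{same} Gaussian anchor $h_{n,0}=\sigma(x_{n,0}+jy_{n,0})$ plus an independent residual $\sigma\sqrt{1-\mu^{2}}(x_{n,k}+jy_{n,k})$, and the residuals are mutually independent across both $n$ and $k$. Hence, once the anchor vector $\mathbf h_0=[h_{1,0},\dots,h_{N,0}]$ is frozen, the precoded magnitudes $\left|g_{1}\right|,\dots,\left|g_{K}\right|$ become mutually independent. First I would write, by the law of total probability,
\[
f_{\left|g_1\right|,\dots,\left|g_K\right|}(\tau_1,\dots,\tau_K)=\int_0^\infty f_{\left|g_0\right|}(\tau_0)\, f_{\left|g_1\right|,\dots,\left|g_K\right|\,\mid\,\left|g_0\right|}(\tau_1,\dots,\tau_K\mid\tau_0)\,d\tau_0 ,
\]
which reduces the proof to two sub-tasks: (i) showing that conditioning on the scalar $\left|g_0\right|$ is as informative as conditioning on the whole vector $\mathbf h_0$, so that the inner density factorizes; and (ii) computing the single-port conditional density $f_{\left|g_k\right|\,\mid\,\left|g_0\right|}(\tau_k\mid\tau_0)$. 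The outer density $f_{\left|g_0\right|}$ is already available: it is ${\rm Nakagami}(\tau_0;\omega,\iota)$ by Theorem~\ref{g_k_Nakagami_RV}, which matches the first factor of the integrand in \eqref{joint_PDF_Nakagami}.

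For sub-task (i), observe from \eqref{correlated_channel} and \eqref{MRT_signal_port_k} that, given $\mathbf h_0$, we may write $r_0^{-\alpha/2}\mathbf h_k=r_0^{-\alpha/2}\bigl(\sqrt{1-\mu^{2}}\,\mathbf w_k+\mu\,\mathbf h_0\bigr)$ with $\mathbf w_1,\dots,\mathbf w_K$ i.i.d. $\mathcal{CN}(\mathbf 0,\sigma^{2}\mathbf I)$, and $\left|g_k\right|=\|r_0^{-\alpha/2}\mathbf h_k\|$. For any unitary $\mathbf U$, replacing $\mathbf h_0$ by $\mathbf U\mathbf h_0$ gives $\|r_0^{-\alpha/2}\bigl(\sqrt{1-\mu^{2}}\,\mathbf U^{\dagger}\mathbf w_k+\mu\,\mathbf h_0\bigr)\|$, and since $(\mathbf U^{\dagger}\mathbf w_1,\dots,\mathbf U^{\dagger}\mathbf w_K)$ has the same joint law as $(\mathbf w_1,\dots,\mathbf w_K)$, the joint conditional law of $(\left|g_1\right|,\dots,\left|g_K\right|)$ is unchanged. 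Therefore it depends on $\mathbf h_0$ only through $\|\mathbf h_0\|$, equivalently only through $\left|g_0\right|=r_0^{-\alpha/2}\|\mathbf h_0\|$. Combining this with the conditional independence noted above yields
\[
f_{\left|g_1\right|,\dots,\left|g_K\right|\,\mid\,\left|g_0\right|}(\tau_1,\dots,\tau_K\mid\tau_0)=\prod_{k=1}^{K} f_{\left|g_k\right|\,\mid\,\left|g_0\right|}(\tau_k\mid\tau_0).
\]

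For sub-task (ii), conditioned on $\left|g_0\right|=\tau_0$ the vector $r_0^{-\alpha/2}\mathbf h_k$ is complex Gaussian with per-component covariance $(1-\mu^{2})r_0^{-\alpha}\sigma^{2}$ and mean-vector squared norm $\mu^{2}r_0^{-\alpha}\|\mathbf h_0\|^{2}=\mu^{2}\tau_0^{2}$. Hence $\left|g_k\right|^{2}$ is a scaled non-central chi-squared variable with $2N$ (real) degrees of freedom, scale $\tfrac{(1-\mu^{2})\iota}{2\omega}$ and non-centrality $\lambda=\tfrac{2\omega\mu^{2}\tau_0^{2}}{(1-\mu^{2})\iota}$, where $\omega=N$ and $\iota=r_0^{-\alpha}N\sigma^{2}$ as in \eqref{eq:omega}. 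Inserting the standard non-central $\chi^{2}_{2\omega}$ density (whose Bessel order is $\omega-1$), applying the change of variable $\tau_k=\sqrt{\left|g_k\right|^{2}}$ with Jacobian $2\tau_k$, and simplifying the constants produces exactly the per-port factor inside the product in \eqref{joint_PDF_Nakagami}. Multiplying by $f_{\left|g_0\right|}(\tau_0)={\rm Nakagami}(\tau_0;\omega,\iota)$ and integrating over $\tau_0\in(0,\infty)$ completes the proof.

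I expect the main obstacle to be sub-task (i): making precise the claim that the scalar $\left|g_0\right|$ is a sufficient conditioning variable, i.e., the unitary-invariance argument and the resulting factorization of the conditional joint density. Everything else is either a direct appeal to the non-central $\chi^{2}$ density and the Nakagami result already proved in Theorem~\ref{g_k_Nakagami_RV}, or routine bookkeeping to check that the scale and non-centrality parameters collapse into the compact expressions involving $\omega$ and $\iota$; this algebra is purely mechanical and I would relegate it to the appendix.
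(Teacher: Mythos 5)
Your proposal is correct, and it follows the same top-level decomposition as the paper: condition on $\left|g_0\right|$, factorize the conditional joint law across ports, and then integrate against the Nakagami density of $\left|g_0\right|$ from Theorem~\ref{g_k_Nakagami_RV}. The difference is in how the key ingredient --- the conditional per-port density --- is obtained. The paper simply imports the conditional joint PDF of correlated Nakagami variates from an external reference (Proposition~2 of the cited work by Tlebaldiyeva \emph{et al.}) and multiplies by $f_{\left|g_0\right|}$, whereas you derive it from first principles: given $\mathbf h_0$, each $\mathbf h_k$ is conditionally $\mathcal{CN}(\mu\mathbf h_0,(1-\mu^2)\sigma^2\mathbf I)$ and the ports are conditionally independent, so $\left|g_k\right|^2$ is a scaled non-central $\chi^2_{2N}$ whose law depends on $\mathbf h_0$ only through $\|\mathbf h_0\|$ (your unitary-invariance argument, or equivalently the standard fact that a non-central $\chi^2$ depends on the mean vector only through its norm). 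This settles the sufficiency of the scalar $\left|g_0\right|$ as the conditioning variable, which you correctly identify as the one point needing care; the subsequent square-root change of variables with the parameters $\omega=N$, $\iota=r_0^{-\alpha}N\sigma^2$, scale $\tfrac{(1-\mu^2)\iota}{2\omega}$ and non-centrality $\tfrac{2\omega\mu^2\tau_0^2}{(1-\mu^2)\iota}$ reproduces exactly the per-port factor with $I_{\omega-1}$ in \eqref{joint_PDF_Nakagami}. Your route is longer but self-contained and makes transparent \emph{why} the cited conditional-independence structure holds for this particular MRT channel model, which the paper's citation-based proof leaves implicit.
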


\begin{proof}
According to \cite[(B.1) in Proposition 2]{joint_Nakagamai_PDF}, conditioned on $\left| g_{0}\right|$, the joint PDF of $\left|g_{1}\right|,\dots, \left|g_{K}\right|$ is given by
\begin{multline} 
{f_{\left| {{g_1}} \right|, \dots ,\left| {{g_K}} \right|\left| {\left| {{g_0}} \right|} \right.}}\left( {{\tau_1}, \dots ,{\tau_K}|{\tau_0}} \right)\\
=\left[\frac{2\omega\tau_{0}^{1-\omega}e^{-\frac{\omega\mu^2}{\iota(1-\mu^2)}\tau_{0}^2}}{\iota(1-\mu^2)\mu^{\omega-1}}\right]^{K}\\
\times\prod_{k = 1}^K \tau_{k}^{\omega} e^{-\frac{\omega\tau_{k}^2}{\iota(1-\mu^2)}} {I_{\omega - 1}}\left[ {\frac{{2\omega{\mu}{\tau_0}{\tau_k}}}{{\iota\left( {1 - \mu^2} \right)}}} \right].
\end{multline}     
With the PDF of $\left| {{g_0}} \right|$ given in \eqref{Nakagami_PDF}, the unconditional joint PDF can be expressed as \eqref{joint_PDF_Nakagami}, which completes the proof.
\end{proof}

\begin{theorem}\label{joint_CDF_signal_channel}
The CDF of the MRT precoded channels among the $K$ ports $\left|g_{1}\right|, \dots, \left|g_{K}\right|$ can be expressed as
\begin{multline}\label{joint_CDF_signal_channel_formula}
{F_{\left| {{g_1}} \right|, \dots ,\left| {{g_K}} \right| }}\left( {{\tau_1}, \dots ,{\tau_K}} \right)\\
= \frac{2\omega^{\omega}}{\Gamma(\omega)\iota^{\omega}}\int_{0}^{\infty}t^{2\omega-1} e^{-\frac{\omega t^2}{\iota}}\\
\times \prod_{k=1}^{K}\left[1-Q_{\omega}\left(\sqrt{\frac{2\omega\mu^2t^2}{\iota(1-\mu^2)}},\sqrt{\frac{2\omega\tau_{k}^2}{\iota(1-\mu^2)}}\right)\right]dt,
\end{multline}
where $Q_{\omega}(\cdot,\cdot)$ is the generalized Marcum $Q$-function. 
\end{theorem}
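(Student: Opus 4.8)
The plan is to assemble the joint CDF from the conditional-independence structure already uncovered in the proof of Theorem~2. Conditioned on $\left|g_0\right|=\tau_0$, the joint density of $\left|g_1\right|,\dots,\left|g_K\right|$ factorizes into $K$ factors of identical form, so the conditional joint CDF is the product of the conditional marginal CDFs, $F_{\left|g_1\right|,\dots,\left|g_K\right|\mid\left|g_0\right|}(\tau_1,\dots,\tau_K\mid\tau_0)=\prod_{k=1}^{K}F_{\left|g_k\right|\mid\left|g_0\right|}(\tau_k\mid\tau_0)$. The task thus splits into two pieces: (i) evaluating a single conditional marginal CDF $F_{\left|g_k\right|\mid\left|g_0\right|}(\tau_k\mid\tau_0)$, and (ii) removing the conditioning by integrating against the Nakagami law of $\left|g_0\right|$ supplied by Theorem~1.

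For step (i), I would isolate from the $k$-th factor in \eqref{joint_PDF_Nakagami} the conditional marginal density $f_{\left|g_k\right|\mid\left|g_0\right|}(\tau_k\mid\tau_0)\propto \tau_k^{\omega}\exp\!\big(-\omega\tau_k^2/[\iota(1-\mu^2)]\big)I_{\omega-1}\!\big(2\omega\mu\tau_0\tau_k/[\iota(1-\mu^2)]\big)$, and recast it into the canonical integrand of the generalized Marcum $Q$-function, $Q_M(a,b)=\int_b^\infty x(x/a)^{M-1}e^{-(x^2+a^2)/2}I_{M-1}(ax)\,dx$. The change of variable $x=\sqrt{2\omega/[\iota(1-\mu^2)]}\,\tau_k$, together with the identification $a=\sqrt{2\omega\mu^2\tau_0^2/[\iota(1-\mu^2)]}$ and the order $M=\omega$, should cause the Jacobian and all the $\tau_0$-, $\mu$- and $\iota$-prefactors to collapse exactly onto the Marcum normalization, yielding $\mathbb{P}\{\left|g_k\right|>\tau_k\mid\left|g_0\right|=\tau_0\}=Q_\omega\!\big(a,\sqrt{2\omega\tau_k^2/[\iota(1-\mu^2)]}\big)$ and hence $F_{\left|g_k\right|\mid\left|g_0\right|}(\tau_k\mid\tau_0)=1-Q_\omega(\cdot,\cdot)$. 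For step (ii), I would multiply the product $\prod_{k=1}^K[1-Q_\omega(\cdot,\cdot)]$ by ${\rm Nakagami}(t;\omega,\iota)$ and integrate over $t\in(0,\infty)$; substituting the explicit density \eqref{Nakagami_PDF} with $t$ playing the role of $\tau_0$ then reproduces \eqref{joint_CDF_signal_channel_formula} verbatim.

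The main obstacle is the bookkeeping in the change of variables of step (i): one has to check that every power of $\tau_0$, $\mu$, $\iota$ and of the constant $2\omega$ in the prefactor of the conditional density is absorbed so that no stray multiplicative factor survives, and that the new lower limit and the noncentrality parameter $a$ coincide with the two Marcum arguments appearing in \eqref{joint_CDF_signal_channel_formula}. It is also worth noting at this point that $M=\omega=N$ is a positive integer, so the generalized Marcum $Q$-function is well defined. Once this single-port identity is established, the product structure (inherited from conditional independence) and the deconditioning integral are both routine.
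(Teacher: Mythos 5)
Your proposal is correct, but it takes a different route from the paper: the paper's entire proof of this theorem is a one-line citation of \cite[Theorem 1]{joint_Nakagamai_PDF}, whereas you reconstruct the result from first principles using the conditional-independence structure exposed in the proof of the joint-PDF theorem. Your key step checks out: with the substitution $x=\sqrt{2\omega/[\iota(1-\mu^2)]}\,\tau_k$ and $a=\mu\tau_0\sqrt{2\omega/[\iota(1-\mu^2)]}$, the factor $x(x/a)^{\omega-1}e^{-(x^2+a^2)/2}I_{\omega-1}(ax)\,dx$ reproduces exactly the $k$-th conditional density factor $\frac{2\omega\tau_0^{1-\omega}\tau_k^{\omega}}{\iota(1-\mu^2)\mu^{\omega-1}}e^{-\omega\mu^2\tau_0^2/[\iota(1-\mu^2)]}e^{-\omega\tau_k^2/[\iota(1-\mu^2)]}I_{\omega-1}\bigl(2\omega\mu\tau_0\tau_k/[\iota(1-\mu^2)]\bigr)\,d\tau_k$, so the conditional tail probability is $Q_\omega\bigl(a,\sqrt{2\omega\tau_k^2/[\iota(1-\mu^2)]}\bigr)$ with no stray constants, and deconditioning against ${\rm Nakagami}(t;\omega,\iota)$ gives the stated formula verbatim (the normalization $Q_\omega(a,0)=1$ guarantees the complement is a bona fide conditional CDF). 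What your approach buys is a self-contained verification that the cited external result applies with the specific parameters $(\omega,\iota,\mu)$ of this model, at the cost of the change-of-variable bookkeeping you rightly flag; the paper's approach buys brevity at the cost of leaving that compatibility check to the reader.
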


\begin{proof}
This comes directly from \cite [Theorem 1]{joint_Nakagamai_PDF}.
\end{proof}

\subsection{The $f$-FAMA Case}
In this subsection, we report the main results when the $f$-FAMA strategy is adopted at each user. Before we present the results, we find the following lemma useful.

\begin{lemma}\label{Lemma_f_Interference}
In $f$-FAMA, the random variable, $|g_{k}^{\rm I}|$, follows a Rayleigh distribution, with the PDF given by
\begin{align}\label{FAMA_interference}
f_{|g_{k}^{\rm I}|}(\tau_{k}) = \frac{2\tau_{k}}{\sigma_{\rm I}^2}e^{-\frac{\tau_{k}^2}{\sigma_{\rm I}^2}}  ,
\end{align}
where $\sigma_{\rm I}^2 = \sum_{i=1}^{U}r_{i}^{-\alpha}\sigma^2\sigma_{\rm s}^2$. Also, the correlation parameter between $|g_{0}^{\rm I}|$ and $|g_{k}^{\rm I}|$ is expressed as $\rho_{|g_{0}^{\rm I}|, |g_{k}^{\rm I}|} = \mu^2$.
\end{lemma}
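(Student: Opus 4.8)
The plan is to characterize $g_{k}^{\rm I} = \sum_{i=1}^{U} s_{i} g_{k}^{(i)} = \sum_{i=1}^{U} s_{i} r_{i}^{-\alpha/2} h_{k}^{(i)}$ by identifying it as a sum of independent complex Gaussians. First I would fix the port index $k$ and the symbols $s_{i}$, and observe that each term $s_{i} r_{i}^{-\alpha/2} h_{k}^{(i)}$ is, conditionally on $s_i$, a zero-mean complex Gaussian with variance $|s_i|^2 r_i^{-\alpha} \sigma^2$, since $h_{k}^{(i)} \sim \mathcal{CN}(0,\sigma^2)$ as stated in \eqref{interference_channel}. Because the interfering channels $h_{k}^{(i)}$ for different $i$ are independent (they originate from different BSs), the sum $g_{k}^{\rm I}$ is itself a zero-mean complex Gaussian. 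Using the constant-modulus assumption on the symbols (the same one invoked after \eqref{OUT_f_FAMA}), I can set $|s_i|^2 = \sigma_s^2$, so the variance of $g_k^{\rm I}$ becomes $\sum_{i=1}^{U} \sigma_s^2 r_i^{-\alpha}\sigma^2 = \sigma_{\rm I}^2$. Hence $g_{k}^{\rm I} \sim \mathcal{CN}(0,\sigma_{\rm I}^2)$, and its magnitude $|g_{k}^{\rm I}|$ follows a Rayleigh distribution, which after the standard change of variables from a complex normal magnitude gives exactly the PDF in \eqref{FAMA_interference}.

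For the correlation claim, I would trace the port-dependence of $h_{k}^{(i)}$ through the same correlated-channel construction used for the desired link in \eqref{correlated_channel}: writing $h_{k}^{(i)} = \sigma(\sqrt{1-\mu^2}\,x_{k}^{(i)} + \mu x_{0}^{(i)}) + j\sigma(\sqrt{1-\mu^2}\,y_{k}^{(i)} + \mu y_{0}^{(i)})$ with a common component at port $0$. Then $g_{0}^{\rm I}$ and $g_{k}^{\rm I}$ share the common parts $s_i r_i^{-\alpha/2}\sigma(\mu x_0^{(i)} + j\mu y_0^{(i)})$. I would compute the correlation coefficient between the magnitudes $|g_0^{\rm I}|$ and $|g_k^{\rm I}|$. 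The cleanest route is to note that the underlying complex Gaussians $g_0^{\rm I}$ and $g_k^{\rm I}$ have (complex) correlation coefficient $\mu^2$ — because the shared component carries fraction $\mu^2$ of the variance while the independent parts carry $1-\mu^2$ — and then invoke the known relation between the correlation of two jointly complex-Gaussian variables and the correlation of their envelopes; for the power-correlation convention used in this paper this yields $\rho_{|g_0^{\rm I}|,|g_k^{\rm I}|} = \mu^2$, matching the statement of Theorem \ref{g_k_Nakagami_RV} for the desired signal side.

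The main obstacle, such as it is, is bookkeeping rather than depth: I must be careful that the ``correlation parameter'' here refers to the same quantity as in Theorem \ref{g_k_Nakagami_RV} (the squared/power correlation, which is why $\mu^2$ rather than $\mu$ appears), and that the constant-modulus assumption is what lets the data symbols $s_i$ be absorbed cleanly into $\sigma_{\rm I}^2$ without leaving residual randomness — otherwise $|g_k^{\rm I}|$ would be a Gaussian scale mixture rather than exactly Rayleigh. I would also double-check the factor conventions in $\sigma_{\rm I}^2$: the real and imaginary parts of $h_{k}^{(i)}$ each have variance $\sigma^2/2$, so the per-interferer contribution to $\mathrm{E}[|g_k^{\rm I}|^2]$ is $\sigma_s^2 r_i^{-\alpha}\sigma^2$, consistent with the stated $\sigma_{\rm I}^2 = \sum_{i=1}^{U} r_i^{-\alpha}\sigma^2\sigma_{\rm s}^2$. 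Once these conventions are pinned down, both halves of the lemma follow directly, so I expect the write-up to be short.
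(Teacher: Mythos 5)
The paper gives no standalone appendix proof for this lemma, but its implicit argument mirrors the derivations in Appendices \ref{appendix:Proof_g_k_Nakagami_RV} and \ref{appendix:Proof_g_k_s_Nakagami_RV}, and your route is essentially that argument: condition on the constant-modulus symbols so that $g_k^{\rm I}=\sum_i s_i r_i^{-\alpha/2}h_k^{(i)}$ is a zero-mean complex Gaussian with variance $\sum_i\sigma_s^2 r_i^{-\alpha}\sigma^2=\sigma_{\rm I}^2$, conclude the Rayleigh law for the envelope, and then carry the port correlation of the $h_k^{(i)}$'s through the sum. Your identification of the constant-modulus assumption as the point that prevents $|g_k^{\rm I}|$ from degenerating into a Gaussian scale mixture is exactly the right caveat, and your normalization check of $\sigma_{\rm I}^2$ is correct. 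The one slip is in the correlation step: with the decomposition $h_k^{(i)}=\mu h_0^{(i)}+\sqrt{1-\mu^2}\,\tilde h_k^{(i)}$, the \emph{complex} correlation coefficient of $g_0^{\rm I}$ and $g_k^{\rm I}$ is $\mu$ (the covariance is $\mu\sigma_{\rm I}^2$), not $\mu^2$; the quantity $\mu^2$ is the fraction of variance in the shared component, which coincides with the \emph{power/envelope} correlation parameter $|\mu|^2$ used by the paper (this is precisely how Appendix \ref{appendix:Proof_g_k_Nakagami_RV} gets $\rho_{|g_0|^2,|g_k|^2}=\mu^2$ before citing \cite{rician_rv}). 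As written, your chain ``complex correlation $\mu^2$ $\Rightarrow$ power correlation $\mu^2$'' would actually give $\mu^4$; since your own closing remark shows you know the paper's convention is the squared quantity, this is a bookkeeping error to fix rather than a gap in the method, and the final statement of the lemma is reached correctly. The only stylistic difference from the paper is that the paper computes the Pearson correlation of $|g_0^{\rm I}|^2$ and $|g_k^{\rm I}|^2$ directly from moments, whereas you pass through the complex correlation coefficient; the two are equivalent for jointly circularly-symmetric Gaussians.
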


\begin{theorem}\label{Theorem_Out_f_FAMA}
For the MRT-$f$-FAMA network, the outage probability is given by \eqref{Outage_f_FAMA_Closedform}, as shown at the top of next page, where $a = \omega$ and $(\cdot)_{p}$ represents the Pochhammer symbol. In addition, $\omega$ and $\iota$ have been given in (\ref{eq:omega}).
\begin{figure*}[]
\begin{align}\label{Outage_f_FAMA_Closedform}
{\mathcal O}_{{\rm MRT}\text{-}f\text{-}{\rm FAMA}}
&=\frac{2\omega^{\omega}}{\Gamma(\omega)\iota^{\omega}}\int_{t_0 = 0}^{\infty}\frac{2t_{0}}{\sigma_{\rm I}^2}e^{-\frac{t_{0}^2}{\sigma_{\rm I}^2}}\int_{t = 0}^{\infty}t^{2\omega-1}e^{-\frac{\omega t^2}{\iota}} \nonumber \\ 
&\quad \times\Bigg[          Q_{1}\left(\sqrt{\frac{2\omega\mu^2\Theta_{\rm f}^2t_{0}^2}{(1-\mu^2)(\iota +\omega\Theta_{\rm f}^2\sigma_{\rm I}^2)}},\sqrt{\frac{2\omega\mu^2t^2}{(1-\mu^2)(\iota+\omega\Theta_{\rm f}^2\sigma_{\rm I}^2)}} \right)-\frac{1}{\mu t_{0}}\left(\frac{\mu\iota t_{0}}{\iota+\omega\Theta_{\rm f}^2\sigma_{\rm I}^2} \right)^a \nonumber \\
&\quad \times \left( \frac{\mu t}{\Theta_{\rm f}} \right)^{1-\omega} e^{-\frac{(\omega\mu^2\Theta_{\rm f}^2t_{0}^2+\omega\mu^2t^2)}{(1-\mu^2)(\iota+\omega\Theta_{\rm f}^2\sigma_{\rm I}^2)}} \sum_{q=0}^{a-1}\sum_{p=0}^{a-q-1}  \left(\frac{\sigma_{\rm I}^2 t}{t_{0}} \sqrt{\frac{\omega (1-\mu^2)}{2\iota }} \right)^{q+p} \left(\frac{\iota+\omega\Theta_{\rm f}^2\sigma_{\rm I}^2}{\Theta_{\rm f}\sigma_{\rm I}^2}\sqrt{\frac{2}{\iota\omega (1-\mu^2)}} \right)^q \nonumber \\
&\quad\times  \left(\sqrt{\frac{2\omega\Theta_{\rm f}^2}{\iota(1-\mu^2)}} \right)^{p}\frac{(a-q-p)_{p}}{p!}I_{1-\omega+q+p}\left( \frac{2\omega\Theta_{\rm f}\mu^2tt_{0}}{(1-\mu^2)(\iota+\omega\Theta_{\rm f}^2\sigma_{\rm I}^2)}\right) 
\Bigg]^{K} dt dt_{0}
\end{align}
\hrulefill
\end{figure*}   
\end{theorem}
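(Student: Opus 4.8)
The plan is to compute the outage probability in \eqref{OUT_f_FAMA} directly from its definition,
\begin{align}\label{eq:plan_outage}
{\mathcal O}_{{\rm MRT}\text{-}f\text{-}{\rm FAMA}} = \mathbb{P}\left\{\max_k \frac{\left|g_k\right|}{\left|g_k^{\rm I}\right|} < \Theta_{\rm f}\right\} = \mathbb{P}\left\{\left|g_k\right| < \Theta_{\rm f}\left|g_k^{\rm I}\right|, \ \forall k\right\},
\end{align}
by conditioning on the two common random variables that induce correlation across the ports, namely $\left|g_0\right|$ for the desired channel (Theorem \ref{g_k_Nakagami_RV}) and $\left|g_0^{\rm I}\right|$ for the interference channel (Lemma \ref{Lemma_f_Interference}). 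The key structural fact I would exploit is that, \emph{conditioned on $\left|g_0\right|=t$ and $\left|g_0^{\rm I}\right|=t_0$}, the pairs $\left(\left|g_k\right|,\left|g_k^{\rm I}\right|\right)$ become independent across $k$, each with a conditional Nakagami/Rayleigh law whose parameters are read off from the conditional PDFs already displayed in the proof of Theorem 2 and in Lemma \ref{Lemma_f_Interference}. Hence the joint event factorizes and the outage probability takes the form
\begin{align}\label{eq:plan_factor}
{\mathcal O}_{{\rm MRT}\text{-}f\text{-}{\rm FAMA}} = \int_0^\infty \int_0^\infty f_{\left|g_0^{\rm I}\right|}(t_0)\, f_{\left|g_0\right|}(t) \Big[\, p(t,t_0) \,\Big]^K \, dt\, dt_0,
\end{align}
where $p(t,t_0) \triangleq \mathbb{P}\{\left|g_k\right| < \Theta_{\rm f}\left|g_k^{\rm I}\right| \mid \left|g_0\right|=t,\left|g_0^{\rm I}\right|=t_0\}$ is a single-port conditional outage probability. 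Substituting the Nakagami PDF $f_{\left|g_0\right|}(t) = {\rm Nakagami}(t;\omega,\iota)$ and the Rayleigh PDF $f_{\left|g_0^{\rm I}\right|}(t_0) = \frac{2t_0}{\sigma_{\rm I}^2}e^{-t_0^2/\sigma_{\rm I}^2}$ immediately reproduces the outer $\frac{2\omega^\omega}{\Gamma(\omega)\iota^\omega}$ prefactor, the $\frac{2t_0}{\sigma_{\rm I}^2}e^{-t_0^2/\sigma_{\rm I}^2}$ and $t^{2\omega-1}e^{-\omega t^2/\iota}$ integrand factors, and the overall $[\cdots]^K$ structure visible in \eqref{Outage_f_FAMA_Closedform}.

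The bulk of the work is therefore the evaluation of $p(t,t_0)$. I would write it as an iterated integral: first integrate the conditional Rayleigh-type density of $\left|g_k^{\rm I}\right|$ (which, given $\left|g_0^{\rm I}\right|=t_0$, is a Rician-like density with noncentrality governed by $\mu t_0$) over $\left|g_k^{\rm I}\right| \ge \left|g_k\right|/\Theta_{\rm f}$, producing a Marcum-$Q$ term in the inner variable; then integrate the resulting expression against the conditional Nakagami density of $\left|g_k\right|$ given $\left|g_0\right|=t$, which from Theorem 2's proof is proportional to $\tau_k^\omega e^{-\omega\tau_k^2/(\iota(1-\mu^2))} I_{\omega-1}[2\omega\mu t \tau_k/(\iota(1-\mu^2))]$. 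This is an integral of the shape $\int_0^\infty x^{\omega} e^{-cx^2} I_{\omega-1}(ax) Q_1(\beta, \delta x)\,dx$ (plus, for integer $\omega$, a finite polynomial-in-$x$ correction coming from the series expansion of the integer-order Marcum function $Q_\omega$). The Gaussian–Bessel factors combine via the standard Weber–Schafheitlin / Hankel-type formula $\int_0^\infty x^{\nu+1} e^{-cx^2} I_\nu(ax)\,dx$-type identities, and the cross term with $Q_1$ requires the known integral $\int_0^\infty x\, e^{-cx^2} I_0(ax) Q_1(\beta,\delta x)\,dx$ (a classical result, e.g.\ Nuttall / Simon–Alouini), suitably differentiated or combined with the polynomial correction to reach general integer order. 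After the dust settles one is left with exactly the bracketed expression in \eqref{Outage_f_FAMA_Closedform}: a leading $Q_1$ term with arguments $\sqrt{2\omega\mu^2\Theta_{\rm f}^2 t_0^2/[(1-\mu^2)(\iota+\omega\Theta_{\rm f}^2\sigma_{\rm I}^2)]}$ and $\sqrt{2\omega\mu^2 t^2/[(1-\mu^2)(\iota+\omega\Theta_{\rm f}^2\sigma_{\rm I}^2)]}$, minus a double sum over $q,p$ of products of powers, a Pochhammer/factorial coefficient $(a-q-p)_p/p!$, an exponential $e^{-(\omega\mu^2\Theta_{\rm f}^2 t_0^2+\omega\mu^2 t^2)/[(1-\mu^2)(\iota+\omega\Theta_{\rm f}^2\sigma_{\rm I}^2)]}$, and a modified Bessel $I_{1-\omega+q+p}$ of the argument $2\omega\Theta_{\rm f}\mu^2 t t_0/[(1-\mu^2)(\iota+\omega\Theta_{\rm f}^2\sigma_{\rm I}^2)]$. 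The combination $\iota+\omega\Theta_{\rm f}^2\sigma_{\rm I}^2$ arises naturally as the effective quadratic coefficient after completing the square in the joint exponent of $\left|g_k\right|$ and $\left|g_k^{\rm I}\right|$, and the $\Theta_{\rm f}$-scalings track the change of variable $\left|g_k^{\rm I}\right| = \left|g_k\right|/\Theta_{\rm f}$.

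The main obstacle I anticipate is the $p(t,t_0)$ integral for \emph{general integer} $\omega=N$: for $\omega=1$ it is a single clean Marcum-$Q$ identity, but for $\omega>1$ the integer-order generalized Marcum function $Q_\omega$ expands into $Q_1$ plus a finite sum involving $I_1,\dots,I_{\omega-1}$ with polynomial prefactors, and one must carefully integrate each of those against the Nakagami $x^\omega e^{-cx^2}I_{\omega-1}(ax)$ kernel, re-index the resulting products of Bessel functions (using a product/recurrence identity for $I_\mu I_\nu$ or a series for one of them), and collect everything into the single double sum with the Pochhammer coefficient. Getting the index bookkeeping right — so that the range $0\le q\le a-1$, $0\le p\le a-q-1$ and the Bessel order $1-\omega+q+p$ emerge exactly as written — is the delicate step; I would handle it by first doing $\omega=1,2$ explicitly to fix the pattern, then proving the general case by induction on $\omega$ (or directly via a known closed form for $\int_0^\infty x^{m} e^{-cx^2} I_{n}(ax)\,dx$ with both $m,n$ integers). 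Everything outside this kernel computation — the conditioning/factorization argument, plugging in the marginal PDFs, and assembling \eqref{eq:plan_factor} — is routine given Theorems \ref{g_k_Nakagami_RV}--\ref{joint_CDF_signal_channel} and Lemma \ref{Lemma_f_Interference}.
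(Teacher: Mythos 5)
Your decomposition is, up to presentation, the one the paper uses: condition on the two port-$0$ latent magnitudes $\left|g_0\right|=t$ and $\left|g_0^{\rm I}\right|=t_0$ so that the per-port events $\{\left|g_k\right|<\Theta_{\rm f}\left|g_k^{\rm I}\right|\}$ become conditionally independent, pull the product outside as a $K$-th power of a single-port probability $p(t,t_0)$, and integrate against the Nakagami and Rayleigh laws of the latent variables. The paper reaches the identical factorization by writing the joint signal CDF (Theorem~\ref{joint_CDF_signal_channel}) and the joint interference PDF each as an integral over its own latent variable and then swapping integration order via Tonelli, so your \eqref{eq:plan_factor} is exactly its step $(d)$. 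Where you diverge is at the crux, which you describe but do not execute: the closed-form evaluation of the single-port integral for general integer $\omega=N$. The paper closes this in two cited steps rather than by expansion-and-induction: first the reflection identity $1-Q_{\omega}(a,b)=Q_{1-\omega}(b,a)$ converts the conditional signal CDF into a negative-order Marcum $Q$ whose \emph{first} argument is proportional to the integration variable $t_k$, and then Proposition~1 of Ermolova and Tirkkonen (the Laplace transform of a product of a generalized Marcum $Q$, a Bessel $I$, and a power function) evaluates $\int_0^\infty 2t_k e^{-ct_k^2}\,Q_{1-\omega}(\delta t_k,\beta)\,I_0(at_k)\,dt_k$ in one stroke, which is precisely where the leading $Q_1$ term, the double sum over $q,p$, the Pochhammer coefficient $(a-q-p)_p/p!$, and the Bessel order $1-\omega+q+p$ in \eqref{Outage_f_FAMA_Closedform} come from. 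Your proposed route (integrating the interference tail first to get $Q_1(\beta,\delta x)$ in the \emph{second} argument, then attacking $\int x^{\omega}e^{-cx^2}I_{\omega-1}(ax)Q_1(\beta,\delta x)\,dx$ with Nuttall-type integrals and induction on $\omega$) is not wrong in principle, but it is exactly the index-bookkeeping morass you anticipate; the reflection identity plus the known Laplace transform is the missing ingredient that turns your plan into a proof.
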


\begin{proof}
See Appendix \ref{Proof_Out_f_FAMA}. 
\end{proof}

Note that as $\tau \to \infty$, $I_{1-\omega+q+p}(\tau) \to \infty$, which leads to a numerical computation problem in \eqref{Outage_f_FAMA_Closedform}.  To address this issue, we have the following corollary.

\begin{corollary}\label{Corollary_Out_f_FAMA}
The outage probability in \eqref{Outage_f_FAMA_Closedform} can be numerically evaluated using \eqref{Outage_f_FAMA_No_Besel}, see top of next page.
\begin{figure*}[]
\centering
\begin{align}\label{Outage_f_FAMA_No_Besel}
{\mathcal O}_{{\rm MRT}\text{-}f\text{-}{\rm FAMA}}
%            {\mathcal O}_{{f-{\rm MISO-FAS}}}
&=\frac{2\omega^{\omega}}{\Gamma(\omega)\iota^{\omega}}\int_{t_0 = 0}^{\infty}\frac{2t_{0}}{\sigma_{\rm I}^2}e^{-\frac{t_{0}^2}{\sigma_{\rm I}^2}}\int_{t = 0}^{\infty}t^{2\omega-1}e^{-\frac{\omega t^2}{\iota}} \nonumber \\ 
&\quad \times\Bigg[          Q_{1}\left(\sqrt{\frac{2\omega\mu^2\Theta_{\rm f}^2t_{0}^2}{(1-\mu^2)(\iota +\omega\Theta_{\rm f}^2\sigma_{\rm I}^2)}},\sqrt{\frac{2\omega\mu^2t^2}{(1-\mu^2)(\iota+\omega\Theta_{\rm f}^2\sigma_{\rm I}^2)}} \right)-\frac{1}{\pi\mu t_{0}}e^{-\frac{(\sqrt{\omega}\mu\Theta_{\rm f}t_{0}-\sqrt{\omega}\mu t)^2}{(1-\mu^2)(\iota+\omega\Theta_{\rm f}^2\sigma_{\rm I}^2)}} \nonumber \\
&\quad \times \left(\frac{\mu\iota t_{0}}{\iota+\omega\Theta_{\rm f}^2\sigma_{\rm I}^2} \right)^a\left( \frac{\mu t}{\Theta_{\rm f}} \right)^{1-\omega}
\sum_{q=0}^{a-1}\sum_{p=0}^{a-q-1}  \left(\frac{\sigma_{\rm I}^2 t}{t_{0}} \sqrt{\frac{\omega (1-\mu^2)}{2\iota }} \right)^{q+p} \left(\frac{\iota+\omega\Theta_{\rm f}^2\sigma_{\rm I}^2}{\Theta_{\rm f}\sigma_{\rm I}^2}\sqrt{\frac{2}{\iota\omega (1-\mu^2)}} \right)^q \nonumber \\
&\quad\times  \left(\sqrt{\frac{2\omega\Theta_{\rm f}^2}{\iota(1-\mu^2)}} \right)^{p}\frac{(a-q-p)_{p}}{p!}\int_{0}^{\pi} \cos{(\theta(1-\omega+q+p))}e^{- \frac{4\omega\Theta_{\rm f}\mu^2tt_{0}}{(1-\mu^2)(\iota+\omega\Theta_{\rm f}^2\sigma_{\rm I}^2)}\sin^2{\frac{\theta}{2}}}d\theta 
\Bigg]^{K} dt dt_{0}
\end{align}
\hrulefill
\end{figure*}
\end{corollary}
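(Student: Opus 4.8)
The plan is to leave \eqref{Outage_f_FAMA_Closedform} intact except for rewriting its one numerically problematic factor --- the product of the exponential $e^{-\frac{\omega\mu^{2}(\Theta_{\rm f}^{2}t_{0}^{2}+t^{2})}{(1-\mu^{2})(\iota+\omega\Theta_{\rm f}^{2}\sigma_{\rm I}^{2})}}$ with the modified Bessel function $I_{1-\omega+q+p}(\cdot)$ appearing inside the bracket --- as a bounded, finite-range integral. The first step is to observe that the Bessel order $\nu\triangleq 1-\omega+q+p$ is always an \emph{integer}: by \eqref{eq:omega}, $\omega=N$ is a positive integer, and $q,p$ are non-negative integers, so in fact $1-\omega\le\nu\le 0$. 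For integer order one may invoke the classical representation $I_{\nu}(z)=\frac{1}{\pi}\int_{0}^{\pi}e^{z\cos\theta}\cos(\nu\theta)\,d\theta$; the companion term in the general formula for $I_{\nu}$ carries a factor $\sin(\nu\pi)$ and therefore vanishes, and the identity remains valid for negative integer $\nu$ since $I_{-n}=I_{n}$ and $\cos(-n\theta)=\cos(n\theta)$.

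Next I would substitute this representation into \eqref{Outage_f_FAMA_Closedform}. Writing $A\triangleq \frac{\omega\mu^{2}}{(1-\mu^{2})(\iota+\omega\Theta_{\rm f}^{2}\sigma_{\rm I}^{2})}$ for brevity, the factor of interest is $e^{-A(\Theta_{\rm f}^{2}t_{0}^{2}+t^{2})}I_{\nu}(2A\Theta_{\rm f}tt_{0})$. Splitting the exponent through $A(\Theta_{\rm f}^{2}t_{0}^{2}+t^{2})=A(\Theta_{\rm f}t_{0}-t)^{2}+2A\Theta_{\rm f}tt_{0}$, absorbing the $2A\Theta_{\rm f}tt_{0}$ piece into the Bessel integral, and using $1-\cos\theta=2\sin^{2}(\theta/2)$, this factor becomes
\[
\frac{1}{\pi}\,e^{-A(\Theta_{\rm f}t_{0}-t)^{2}}\int_{0}^{\pi}e^{-4A\Theta_{\rm f}tt_{0}\sin^{2}(\theta/2)}\cos(\nu\theta)\,d\theta .
\]
Because $A(\Theta_{\rm f}t_{0}-t)^{2}=\frac{(\sqrt{\omega}\mu\Theta_{\rm f}t_{0}-\sqrt{\omega}\mu t)^{2}}{(1-\mu^{2})(\iota+\omega\Theta_{\rm f}^{2}\sigma_{\rm I}^{2})}$ and $4A\Theta_{\rm f}tt_{0}=\frac{4\omega\Theta_{\rm f}\mu^{2}tt_{0}}{(1-\mu^{2})(\iota+\omega\Theta_{\rm f}^{2}\sigma_{\rm I}^{2})}$, and because the pulled-out factors $\frac{1}{\pi}$ and $e^{-A(\Theta_{\rm f}t_{0}-t)^{2}}$ do not depend on $q$ or $p$, they can be taken outside the double sum over $q,p$. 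Collecting the surviving terms then reproduces \eqref{Outage_f_FAMA_No_Besel} verbatim.

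The only formal point that then needs checking is the interchange that moves the $\theta$-integral past the outer integrals over $t_{0}$ and $t$. This follows from Fubini--Tonelli once one notes that after the substitution the integrand is absolutely integrable: the bracketed factor is itself a conditional probability and hence bounded, the $\theta$-integrand lies in $[-1,1]$, and the remaining weight is dominated up to a constant by $\frac{2t_{0}}{\sigma_{\rm I}^{2}}e^{-t_{0}^{2}/\sigma_{\rm I}^{2}}\,t^{2\omega-1}e^{-\omega t^{2}/\iota}$. I do not expect any genuine obstacle here; the substance of the corollary is entirely the observation that $\nu$ is always an integer --- which is what makes the single-integral Bessel representation available --- together with its motivation: in \eqref{Outage_f_FAMA_Closedform} the finite quantity $e^{-A(\Theta_{\rm f}^{2}t_{0}^{2}+t^{2})}I_{\nu}(2A\Theta_{\rm f}tt_{0})$ is evaluated as an underflowing factor times an overflowing one, whereas \eqref{Outage_f_FAMA_No_Besel} expresses the same quantity as a single manifestly bounded integral well suited to numerical quadrature.
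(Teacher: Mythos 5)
Your proposal is correct and follows essentially the same route as the paper's Appendix C: substitute the finite-range integral representation $I_{\nu}(\tau)=\frac{1}{\pi}\int_{0}^{\pi}e^{\tau\cos\theta}\cos(\nu\theta)\,d\theta$ (valid here because $\nu=1-\omega+q+p$ is an integer, as you rightly note), complete the square in the exponent, and use $1-\cos\theta=2\sin^{2}(\theta/2)$ to arrive at \eqref{Outage_f_FAMA_No_Besel}. Your closing Fubini--Tonelli check is unnecessary since the $\theta$-integral remains inside the bracket and no interchange with the outer integrals is performed, but this does not affect correctness.
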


\begin{proof}
See Appendix \ref{Proof_Corollary_Out_f_FAMA}.
\end{proof}

In the special case $K=1$, the correlation parameter is reduced to $\mu = 1$. However, substituting $\mu = 1$ directly into \eqref{Outage_f_FAMA_No_Besel} leads the denominator being $0$. The following remark provides an expression to address this special case. 

\begin{remark}\label{OP_fixed_antenna_Fast}
Under MRT-$f$-FAMA and $K=1$, the outage probability can be evaluated by 
\begin{align}\label{fixed_user_FAST}
{\mathcal O}_{{\rm MRT}\text{-}f\text{-}{\rm FAMA}}=\int_{0}^{\infty}
\frac{2t}{\sigma_{\rm I}^2}e^{-\frac{t^2}{\sigma_{\rm I}^2}}P\left(\omega,\frac{\omega\Theta_{\rm f}^2}{\iota}t^2\right)dt,
\end{align}
where $P(\cdot,\cdot)$ denotes the regularized lower incomplete gamma function. The expression is obtained by the definition of outage probability with the result in Theorem \ref{g_k_Nakagami_RV} and Lemma \ref{Lemma_f_Interference}. This case corresponds to a user with a single fixed antenna.
\end{remark}

\subsection{The $s$-FAMA Case}
Here, we turn our attention to the $s$-FAMA case and present the distributions that are necessary to obtain the expression for the outage probability given in the following theorem. Since the exact distribution of $\left|g_{k}^{{\rm [s]}} \right|$ is not tractable, we provide an approximate distribution.

\begin{theorem}\label{g_k_s_Nakagami_RV}
In $s$-FAMA, the random variable $\left|g_{k}^{\rm {[s]}} \right|$ follows a Nakagami distribution, with the PDF given by
\begin{align}\label{slow_interference_distribution}
f_{\left|g_{k}^{\rm {[s]}} \right|}(\tau_{k}) =  {\rm Nakagami}(\tau_{k}; \Omega, \phi),
\end{align}
where 
\begin{equation}\label{eq:Omega}
\left\{\begin{aligned}
\Omega &= \frac{\left(\sum_{i=1}^{U}r_{i}^{-\alpha}\right)^2}{\sum_{i=1}^{U}r_{i}^{-2\alpha}},\\
\phi& =  \sigma^2\sum_{i=1}^{U}r_{i}^{-\alpha}. 
\end{aligned}\right.
\end{equation}
Similar to the $f$-FAMA case before, the correlation parameter between $\left|g_{0}^{\rm {[s]}} \right|$ and $\left|g_{k}^{\rm {[s]}} \right|$ is given as $\rho_{\left|g_{0}^{\rm {[s]}} \right|, \left|g_{k}^{\rm {[s]}} \right|} = \mu^2$.
\end{theorem}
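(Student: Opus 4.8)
The plan is to approximate the sum of scaled squared-magnitude terms $\left|g_{k}^{\rm [s]}\right|^2 = \sum_{i=1}^{U} r_{i}^{-\alpha}\left|h_{k}^{(i)}\right|^2$ by a single Gamma random variable via second-order moment matching, and then translate this into the Nakagami form for $\left|g_{k}^{\rm [s]}\right|$. First I would note that since $h_{k}^{(i)}\sim\mathcal{CN}(0,\sigma^2)$, each $\left|h_{k}^{(i)}\right|^2$ is exponential with mean $\sigma^2$, i.e. ${\rm Gamma}(1,1/\sigma^2)$, and the $U$ interferers are independent. Hence $\left|g_{k}^{\rm [s]}\right|^2$ is a sum of independent but non-identically weighted exponentials — a hypoexponential / generalized-chi-squared type variable whose exact PDF is the intractable object the paper wants to avoid. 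The standard remedy is to fit a single ${\rm Gamma}(\Omega,\Omega/\phi)$ whose first two moments agree with those of $\left|g_{k}^{\rm [s]}\right|^2$.

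The key steps, in order, are: (i) compute $m_1 = {\rm E}\!\left[\left|g_{k}^{\rm [s]}\right|^2\right] = \sigma^2\sum_{i=1}^{U} r_{i}^{-\alpha}$ using linearity and ${\rm E}\!\left[\left|h_{k}^{(i)}\right|^2\right]=\sigma^2$; (ii) compute ${\rm Var}\!\left[\left|g_{k}^{\rm [s]}\right|^2\right] = \sigma^4\sum_{i=1}^{U} r_{i}^{-2\alpha}$ using independence across $i$ and ${\rm Var}\!\left[\left|h_{k}^{(i)}\right|^2\right]=\sigma^4$; (iii) set these equal to the mean $\Omega\phi'$ and variance $\Omega\phi'^2$ of a Gamma with shape $\Omega$ and scale $\phi' = \phi/\Omega$, solving to get $\Omega = m_1^2/{\rm Var} = \bigl(\sum_{i} r_{i}^{-\alpha}\bigr)^2/\sum_{i} r_{i}^{-2\alpha}$ and $\phi = m_1 = \sigma^2\sum_{i} r_{i}^{-\alpha}$, which matches \eqref{eq:Omega}; and (iv) invoke the elementary fact that if $X^2\sim{\rm Gamma}(\Omega,\Omega/\phi)$ then $X$ is Nakagami-$m$ with shape $\Omega$ and spread parameter $\phi$, giving exactly \eqref{Nakagami_PDF} with $(\alpha,\beta)=(\Omega,\phi)$ — this is the change-of-variables $\tau\mapsto\tau^2$ on the Gamma density.

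For the correlation statement $\rho_{\left|g_{0}^{\rm [s]}\right|,\left|g_{k}^{\rm [s]}\right|}=\mu^2$, I would reuse the structure already established in Lemma \ref{Lemma_f_Interference} and Theorem \ref{g_k_Nakagami_RV}: the channel model \eqref{correlated_channel} couples port $k$ to port $0$ through the shared component $h_{n,0}$ with weight $\mu$, so each $h_{k}^{(i)}$ shares a fraction $\mu^2$ of its power (in the power/intensity sense) with $h_{0}^{(i)}$; summing over $i$ with fixed deterministic weights $r_{i}^{-\alpha}$ preserves this proportionality, so the power-correlation coefficient between $\left|g_{0}^{\rm [s]}\right|^2$ and $\left|g_{k}^{\rm [s]}\right|^2$ — equivalently the correlation parameter in the correlated-Nakagami sense used throughout the paper — remains $\mu^2$, exactly as in the $f$-FAMA interference case. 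I would simply state this by analogy, referencing the earlier derivation rather than repeating it.

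The main obstacle is justifying that a single-Gamma moment-matched fit is the right approximation and being clear about what ``follows a Nakagami distribution'' means here — it is approximate, not exact, and the paper must flag that (as it does in the sentence preceding the theorem). A secondary subtlety is whether moment matching should be done on $\left|g_{k}^{\rm [s]}\right|^2$ (natural, since that is the sum of exponentials) or directly on $\left|g_{k}^{\rm [s]}\right|$; matching on the squared quantity is cleaner and yields the stated parameters, so I would do that and then pass to the square root. Everything else — the moment computations and the Gamma-to-Nakagami change of variables — is routine, so the write-up can be short, essentially a pointer to a moment-matching argument plus the analogy for the correlation.
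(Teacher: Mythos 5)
Your proposal is correct and follows essentially the same route as the paper: the paper likewise approximates $\left|g_{k}^{\rm [s]}\right|^2=\sum_{i}r_i^{-\alpha}\left|h_k^{(i)}\right|^2$ (a sum of independent, non-identically scaled Gamma variates) by a single Gamma via second-order moment matching (citing standard references for that approximation), passes to the square root to obtain the Nakagami form with unchanged shape $\Omega$ and spread $\phi=\nu\Omega$, and establishes $\rho=\mu^2$ by computing the second-order cross-moments of the squared magnitudes and invoking the fact that the magnitude-domain correlation parameter coincides with the power-domain one. The only cosmetic difference is that the paper writes out the moment computation for the correlation explicitly rather than arguing by analogy with the $f$-FAMA case.
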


\begin{proof}
See Appendix \ref{appendix:Proof_g_k_s_Nakagami_RV}.
\end{proof}

\begin{remark}\label{Remark_Nagami_Rayleigh}
When the shape parameter becomes $\Omega = 1$, Nakagami distribution is reduced to Rayleigh distribution. At the same time, if $U = 1$, then the distribution of the interference channel for MRT-$s$-FAMA, $\left|g_{k}^{\rm [s]}\right|$, becomes identical to that of the interference channel for MRT-$f$-FAMA, $\left|g_{k}^{\rm [f]}\right|$.
\end{remark}

\begin{theorem}\label{theorem_joint_PDF_Slow_Interference}
In $s$-FAMA, the joint PDF of the interference channel among the $K$ ports $\left|g_{1}^{\rm {[s]}} \right|,\dots, \left|g_{K}^{\rm {[s]}} \right|$ is given by
\begin{multline}\label{joint_PDF_Nakagami_interference}
{f_{\left|g_{1}^{\rm {[s]}} \right|, \dots ,\left|g_{K}^{\rm {[s]}} \right| }}\left( {{\tau_1}, \dots ,{\tau_K}} \right)\\
=\int_{0}^{\infty} \frac{2\Omega^{\Omega}\tau_{0}^{2\Omega-1}e^{-\frac{\Omega}{\phi}\tau_{0}^2}}{\Gamma(\Omega)\phi^{\Omega}}\\
\times\prod\limits_{k = 1}^K {\frac{{2\Omega \tau_0^{1 - \Omega}\tau_k^\Omega{e^{-\frac{\mu^2\Omega}{{1 - \mu^2}}\frac{{\tau_0^2}}{{\phi}}}}{e^{ - \frac{\Omega}{{1 - {\mu}^2}}\frac{{\tau_k^2}}{{\phi}}}}}}{{\phi\left( {1 - \mu^2} \right){ \mu ^{\Omega - 1}}}}} \\
\times {I_{\Omega - 1}}\left[ {\frac{{2\Omega{\mu}{\tau_0}{\tau_k}}}{{\left( {1 - \mu^2} \right)\phi}}} \right] d\tau_{0},
\end{multline}
where $\Omega$ and $\phi$ are given in (\ref{eq:Omega}).
\end{theorem}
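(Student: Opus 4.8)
The plan is to reuse, essentially verbatim, the argument that produced \eqref{joint_PDF_Nakagami} for the desired channel, with the pair $(\omega,\iota)$ replaced throughout by $(\Omega,\phi)$. Theorem~\ref{g_k_s_Nakagami_RV} already delivers the two facts this argument needs: under the moment-matching approximation, each interference magnitude $\left|g_k^{\rm [s]}\right|$ (the reference port $k=0$ included) is ${\rm Nakagami}(\Omega,\phi)$, and the correlation between $\left|g_0^{\rm [s]}\right|$ and every $\left|g_k^{\rm [s]}\right|$ is the \emph{single} scalar $\mu^2$. Hence the vector $\left(\left|g_1^{\rm [s]}\right|,\dots,\left|g_K^{\rm [s]}\right|\right)$ is, to that approximation, a constant-correlation multivariate Nakagami vector referred to the common port $0$, which is exactly the setting of \cite[(B.1) in Proposition~2]{joint_Nakagamai_PDF}.

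First I would confirm the constant-correlation structure, namely that conditioned on $\left|g_0^{\rm [s]}\right|$ the magnitudes $\left|g_1^{\rm [s]}\right|,\dots,\left|g_K^{\rm [s]}\right|$ are i.i.d. This is inherited from the channel model: writing each interference coefficient in the common-component form $h_k^{(i)}=\sqrt{1-\mu^2}\,\tilde h_k^{(i)}+\mu\,h_0^{(i)}$ with the $\tilde h_k^{(i)}$ independent across the port index $k$, the only randomness shared by distinct ports is $\{h_0^{(i)}\}_i$; therefore the energies $\left|g_k^{\rm [s]}\right|^2=\sum_{i=1}^U r_i^{-\alpha}\bigl|h_k^{(i)}\bigr|^2$ are conditionally independent across $k$ given that common component, and, at the level of the fitted Nakagami model, given $\left|g_0^{\rm [s]}\right|$. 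Next I would substitute $(\omega,\iota)\mapsto(\Omega,\phi)$ together with the correlation $\mu^2$ into \cite[(B.1) in Proposition~2]{joint_Nakagamai_PDF} to obtain the conditional joint density of $\left|g_1^{\rm [s]}\right|,\dots,\left|g_K^{\rm [s]}\right|$ given $\left|g_0^{\rm [s]}\right|$, which is precisely the product-of-Bessel expression appearing inside the integrand of \eqref{joint_PDF_Nakagami_interference}. Finally I would decondition: multiply by the marginal density of $\left|g_0^{\rm [s]}\right|$, which is ${\rm Nakagami}(\tau_0;\Omega,\phi)$ by Theorem~\ref{g_k_s_Nakagami_RV}, and integrate over $\tau_0\in(0,\infty)$ to land on \eqref{joint_PDF_Nakagami_interference}.

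The main obstacle is not the algebra, which is a mechanical relabeling of the earlier proof, but the internal consistency of the approximation that underlies it. Unlike $\left|g_k\right|^2=\sum_{n=1}^N r_0^{-\alpha}\left|h_{n,k}\right|^2$, which is an exact scaled chi-square, the interference energy $\left|g_k^{\rm [s]}\right|^2=\sum_{i=1}^U r_i^{-\alpha}\bigl|h_k^{(i)}\bigr|^2$ is a sum of exponentials with \emph{non-identical} scales, so neither its marginal nor its joint law with $\left|g_0^{\rm [s]}\right|$ is exactly (multivariate) Nakagami. The argument therefore stands or falls on adopting the same moment-matched surrogate used in Theorem~\ref{g_k_s_Nakagami_RV} and checking that it is coherent across the whole vector, i.e.\ that it keeps the reference-to-port correlation at $\mu^2$ for \emph{every} $k$, so that one scalar $\mu$ still drives the joint law; this is exactly what Theorem~\ref{g_k_s_Nakagami_RV} establishes and what makes the constant-correlation formula of \cite{joint_Nakagamai_PDF} applicable here. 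A minor additional point is that $\Omega$ is generally not an integer (one always has $\Omega\ge 1$ since all $r_i^{-\alpha}>0$), so one should check that \cite[(B.1) in Proposition~2]{joint_Nakagamai_PDF} is valid for an arbitrary real shape parameter $\Omega\ge\tfrac12$, and not merely for the integer value $\omega=N$ that occurred in the desired-signal case.
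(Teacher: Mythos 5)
Your proposal follows essentially the same route as the paper's proof: invoke Theorem~\ref{g_k_s_Nakagami_RV} for the Nakagami$(\Omega,\phi)$ marginals and the common correlation $\mu^2$, apply the conditional joint density of \cite[(B.1) in Proposition~2]{joint_Nakagamai_PDF} with $(\omega,\iota)\mapsto(\Omega,\phi)$, and decondition over $\left|g_0^{\rm [s]}\right|$. Your added caveats about the coherence of the moment-matched approximation and the non-integer shape parameter $\Omega$ are sensible observations the paper passes over, but they do not change the argument.
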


\begin{proof}
From Theorem \ref{g_k_s_Nakagami_RV}, the interference term $\left|g_{k}^{\rm {[s]}} \right|$ is formulated with the Nakagami distribution, same as $\left|g_{k}\right|$, while the correlation parameter remains unchanged. Consequently, by substituting $\omega = \Omega$ and $\iota = \phi$ into \eqref{joint_PDF_Nakagami}, the joint PDF of  $\left|g_{1}^{\rm {[s]}} \right|,\dots, \left|g_{K}^{\rm {[s]}} \right|$ can be expressed as \eqref{joint_PDF_Nakagami_interference}.
\end{proof}

\begin{theorem}\label{Theorem_Out_s_FAMA}
Under the MRT-$s$-FAMA scenario, while $\Omega > \omega$, the outage probability is given by \eqref{Outage_S_FAMA_Closedform}, as shown on next  page, where $b = \omega+\Omega-1$.
\begin{figure*}[]
\begin{align}\label{Outage_S_FAMA_Closedform}
{\mathcal O}_{{\rm MRT}\text{-}s\text{-}{\rm FAMA}}
%{\mathcal O}_{s-{\rm MISO-FAS}} 
= &\frac{4\omega^{\omega}\Omega^{\Omega}}{\Gamma(\omega)\Gamma(\Omega)\iota^{\omega}\phi^{\Omega}}\int_{t_0 = 0}^{\infty}t_{0}^{2\Omega-1}e^{-\frac{\Omega t_{0}^2}{\phi}}\int_{t = 0}^{\infty}t^{2\omega-1}e^{-\frac{\omega t^2}{\iota}} \nonumber \\
&\times\Bigg[Q_{\Omega}\left( \sqrt{\frac{2\omega\Omega\mu^2\Theta_{\rm s}^2 t_{0}^2}{(1-\mu^2)(\iota\Omega+\omega\phi\Theta_{\rm s}^2)}}, \sqrt{\frac{2\omega\Omega\mu^2t^2}{(1-\mu^2)(\iota\Omega+\omega\phi\Theta_{\rm s}^2)}}\right)-\left(\frac{1}{\mu t_{0}}\right)^{\Omega} \left( \frac{\mu t}{\Theta_{\rm s}} \right)^{1-\omega}\nonumber \\
& \times \left(\frac{\mu\iota\Omega t_{0} }{\iota\Omega+\omega\phi\Theta_{\rm s}^2}\right)^{b}  e^{-\frac{\omega\Omega\mu^2\Theta_{\rm s}^2t_{0}^2+\omega\Omega\mu^2  t^2}{(1-\mu^2)(\iota\Omega+\omega\phi\Theta_{\rm s}^2)} }\sum_{q=0}^{b-1}\sum_{p=0}^{b-q-1}  \left(\frac{\phi t}{\Omega t_{0}} \sqrt{\frac{\omega (1-\mu^2)}{2\iota }} \right)^{q+p}\left(\frac{\iota\Omega+\omega\phi\Theta_{\rm s}^2}{\phi\Theta_{\rm s}}\sqrt{\frac{2}{ \iota\omega(1-\mu^2)}} \right)^q  \nonumber \\
& \times  \left(\sqrt{\frac{2\omega\Theta_{\rm s}^2}{\iota(1-\mu^2)}} \right)^{p}\frac{(b-q-p)_{p}}{p!}I_{1-\omega+q+p}\left( \frac{2\omega\Omega\Theta_{\rm s}\mu^2 t t_{0}}{(1-\mu^2)(\iota\Omega+\omega\phi\Theta_{\rm s}^2)}\right) \Bigg]^K dt dt_{0}
\end{align}
\hrulefill
\end{figure*}
\end{theorem}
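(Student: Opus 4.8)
The plan is to mirror the argument for Theorem~\ref{Theorem_Out_f_FAMA} in Appendix~\ref{Proof_Out_f_FAMA}, replacing the Rayleigh interference used there by the Nakagami interference of Theorems~\ref{g_k_s_Nakagami_RV} and \ref{theorem_joint_PDF_Slow_Interference}. First I would recast the outage event: since $|g_{s\text{-}{\rm FAMA}}| = \max_{k}|g_{k}|/|g_{k}^{\rm [s]}|$, the event in \eqref{OUT_s_FAMA} equals $\bigcap_{k=1}^{K}\{|g_{k}| < \Theta_{\rm s}|g_{k}^{\rm [s]}|\}$, so that ${\mathcal O}_{{\rm MRT}\text{-}s\text{-}{\rm FAMA}} = \mathbb{P}\{|g_{k}| < \Theta_{\rm s}|g_{k}^{\rm [s]}|,\ k=1,\dots,K\}$.

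Next I would condition on the two common variables $|g_{0}|$ and $|g_{0}^{\rm [s]}|$. By the structure of \eqref{joint_PDF_Nakagami} and \eqref{joint_PDF_Nakagami_interference}, conditioned on $|g_{0}|$ the desired magnitudes $|g_{1}|,\dots,|g_{K}|$ are i.i.d.\ (each with a noncentral-$\chi$-type conditional density carrying $I_{\omega-1}$), and conditioned on $|g_{0}^{\rm [s]}|$ the interference magnitudes $|g_{1}^{\rm [s]}|,\dots,|g_{K}^{\rm [s]}|$ are i.i.d.\ (each carrying $I_{\Omega-1}$); moreover the desired and interfering channels are independent, as they originate from different BSs. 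Conditioning on $|g_{0}|=t$ and $|g_{0}^{\rm [s]}|=t_{0}$ therefore renders the $K$ ratios independent, and
\[
{\mathcal O}_{{\rm MRT}\text{-}s\text{-}{\rm FAMA}} = \int_{0}^{\infty}\!\!\int_{0}^{\infty}{\rm Nakagami}(t;\omega,\iota)\,{\rm Nakagami}(t_{0};\Omega,\phi)\,\big[p(t,t_{0})\big]^{K}\,dt\,dt_{0},
\]
where $p(t,t_{0}) = \mathbb{P}\{|g_{1}| < \Theta_{\rm s}|g_{1}^{\rm [s]}| \mid |g_{0}|=t,\,|g_{0}^{\rm [s]}|=t_{0}\}$. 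Writing out the two Nakagami densities from \eqref{Nakagami_PDF} supplies the prefactor $\tfrac{4\omega^{\omega}\Omega^{\Omega}}{\Gamma(\omega)\Gamma(\Omega)\iota^{\omega}\phi^{\Omega}}$, the weights $t_{0}^{2\Omega-1}e^{-\Omega t_{0}^{2}/\phi}$ and $t^{2\omega-1}e^{-\omega t^{2}/\iota}$, and the outer bracket raised to the $K$-th power in \eqref{Outage_S_FAMA_Closedform}.

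Then I would evaluate $p(t,t_{0})$ in two stages. Integrating first over $|g_{1}|$ against its conditional density turns the inner event probability into the conditional CDF $1-Q_{\omega}\!\big(\sqrt{2\omega\mu^{2}t^{2}/(\iota(1-\mu^{2}))},\,\sqrt{2\omega\Theta_{\rm s}^{2}y^{2}/(\iota(1-\mu^{2}))}\big)$ evaluated at the interference value $y$, exactly as in Theorem~\ref{joint_CDF_signal_channel}; the constant term integrates to $1$, leaving $p(t,t_{0}) = 1 - \int_{0}^{\infty} Q_{\omega}(c_{2},c_{3}y)\, w(y\mid t_{0})\, dy$, where $w(\cdot\mid t_{0})$ is the conditional density of $|g_{1}^{\rm [s]}|$ given $|g_{0}^{\rm [s]}|=t_{0}$ (of noncentral-$\chi$ type with $I_{\Omega-1}$). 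This residual integral has the canonical form $\int_{0}^{\infty} y^{2\Omega-1}e^{-c_{1}y^{2}}\,Q_{\omega}(c_{2},c_{3}y)\,I_{\Omega-1}(c_{4}y)\,dy$. Invoking a closed-form identity for such a product of a generalized Marcum $Q$-function, a Gaussian, a power, and a modified Bessel function (valid under the hypothesis $\Omega>\omega$) produces a $Q_{\Omega}(\cdot,\cdot)$ term plus a terminating double sum over $q,p$ with Pochhammer coefficients $(b-q-p)_{p}/p!$ and order-shifted Bessel functions $I_{1-\omega+q+p}$, with $b=\omega+\Omega-1$. Substituting back, re-expressing the constants $c_{1},\dots,c_{4}$ through the combination $\iota\Omega+\omega\phi\Theta_{\rm s}^{2}$, and collecting all factors reproduces \eqref{Outage_S_FAMA_Closedform}.

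The main obstacle is precisely this residual integral: locating and correctly applying the closed-form evaluation of $\int_{0}^{\infty} y^{2\Omega-1}e^{-c_{1}y^{2}}\,Q_{\omega}(c_{2},c_{3}y)\,I_{\Omega-1}(c_{4}y)\,dy$, carefully tracking the algebra of the four constants, and verifying that $\Omega>\omega$ is exactly the condition that makes the identity hold and the $q,p$ sums terminate at $b-1$. Everything else is bookkeeping that parallels the $f$-FAMA derivation; as a consistency check, setting $\Omega=1$ (so that $b=\omega$, matching $a=\omega$ in \eqref{Outage_f_FAMA_Closedform}) should collapse \eqref{Outage_S_FAMA_Closedform} to \eqref{Outage_f_FAMA_Closedform}, in line with Remark~\ref{Remark_Nagami_Rayleigh}.
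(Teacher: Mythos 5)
Your proposal is correct and follows essentially the same route as the paper's proof: condition on the two common variables so the $K$ ports decouple (this is exactly what the paper's law-of-total-probability step plus Tonelli's interchange accomplishes), then reduce the per-port probability to the integral of a generalized Marcum $Q$-function against a noncentral-$\chi$-type density with $I_{\Omega-1}$, evaluated via the known Laplace-transform identity of Ermolova and Tirkkonen. The only cosmetic difference is that the paper first rewrites $1-Q_{\omega}(a,b)$ as $Q_{1-\omega}(b,a)$ before invoking that identity, whereas you carry the complement explicitly; the resulting computation is the same.
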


\begin{proof}
See Appendix \ref{Proof_Out_s_FAMA}. 
\end{proof}

Similar to the discussion in MRT-$f$-FAMA earlier, as $\tau \to \infty$, $I_{1-\omega+q+p}(\tau) \to \infty$, \eqref{Outage_S_FAMA_Closedform} cannot be numerically calculated. The following corollary can be used to evaluate \eqref{Outage_S_FAMA_Closedform}. 

\begin{corollary}\label{Corollary_Out_s_FAMA}
The outage probability in \eqref{Outage_S_FAMA_Closedform} can be expressed by \eqref{Outage_S_FAMA_No_Besel}, shown at the bottom of next page.
\begin{figure*}[]
\begin{align}\label{Outage_S_FAMA_No_Besel}
{\mathcal O}_{{\rm MRT}\text{-}s\text{-}{\rm FAMA}}
%           {\mathcal O}_{s-{\rm MISO-FAS}} 
= & \frac{4\omega^{\omega}\Omega^{\Omega}}{\Gamma(\omega)\Gamma(\Omega)\iota^{\omega}\phi^{\Omega}}\int_{t_0 = 0}^{\infty}t_{0}^{2\Omega-1}e^{-\frac{\Omega t_{0}^2}{\phi}} \int_{t = 0}^{\infty}t^{2\omega-1}e^{-\frac{\omega t^2}{\iota}} \nonumber \\
\times&\Bigg[Q_{\Omega}\left( \sqrt{\frac{2\omega\Omega\mu^2\Theta_{\rm s}^2 t_{0}^2}{(1-\mu^2)(\iota\Omega+\omega\phi\Theta_{\rm s}^2)}}, \sqrt{\frac{2\omega\Omega\mu^2t^2}{(1-\mu^2)(\iota\Omega+\omega\phi\Theta_{\rm s}^2)}}\right)-\frac{1}{\pi}\left(\frac{1}{\mu t_{0}}\right)^{\Omega}
\left( \frac{\mu t}{\Theta_{\rm s}} \right)^{1-\omega}\nonumber \\
\times& \left(\frac{\mu\iota\Omega t_{0} }{\iota\Omega+\omega\phi\Theta_{\rm s}^2}\right)^{b}e^{-\frac{(\sqrt{\omega\Omega}\mu \Theta_{\rm s} t_{0}-\sqrt{\omega\Omega}\mu t)^2}{(1-\mu^2)(\iota\Omega+\omega\phi\Theta_{\rm s}^2)} }\sum_{q=0}^{b-1}\sum_{p=0}^{b-q-1}  \left(\frac{\phi t}{\Omega t_{0}} \sqrt{\frac{\omega (1-\mu^2)}{2\iota }} \right)^{q+p}  \left(\frac{\iota\Omega+\omega\phi\Theta_{\rm s}^2}{\phi\Theta_{\rm s}}\sqrt{\frac{2}{ \iota\omega(1-\mu^2)}} \right)^q\nonumber \\
\times&  \left(\sqrt{\frac{2\omega\Theta_{\rm s}^2}{\iota(1-\mu^2)}} \right)^{p}\frac{(b-q-p)_{p}}{p!}\int_{0}^{\pi} \cos{(\theta(1-\omega+q+p))}e^{- \frac{4\omega\Omega\Theta_{\rm s}\mu^2tt_{0}}{(1-\mu^2)(\iota\Omega+\omega\phi\Theta_{\rm s}^2)}\sin^2{\frac{\theta}{2}}}d\theta   \Bigg]^K dt dt_{0}
\end{align}
\hrulefill
\end{figure*}
\end{corollary}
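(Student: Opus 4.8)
The plan is to turn \eqref{Outage_S_FAMA_Closedform} into \eqref{Outage_S_FAMA_No_Besel} by replacing the divergent modified Bessel function $I_{1-\omega+q+p}(\cdot)$ with an equivalent but bounded integral representation, exactly in the spirit of Corollary~\ref{Corollary_Out_f_FAMA}. The key observation is that $\omega=N$ is a positive integer and $q,p$ are non-negative integers, so the order $\nu\triangleq 1-\omega+q+p$ is always an integer; moreover $I_{-n}(z)=I_n(z)$ for $n\in\mathbb{Z}$, so the sign of $\nu$ is immaterial. First I would invoke the classical integral representation of the modified Bessel function of the first kind,
\begin{align}
I_{\nu}(z)=\frac{1}{\pi}\int_{0}^{\pi}e^{z\cos\theta}\cos(\nu\theta)\,d\theta-\frac{\sin(\nu\pi)}{\pi}\int_{0}^{\infty}e^{-z\cosh t-\nu t}\,dt,
\end{align}
whose second term vanishes for $\nu\in\mathbb{Z}$, leaving $I_{\nu}(z)=\tfrac{1}{\pi}\int_{0}^{\pi}e^{z\cos\theta}\cos(\nu\theta)\,d\theta$, and substitute it into \eqref{Outage_S_FAMA_Closedform} with $z=\frac{2\omega\Omega\Theta_{\rm s}\mu^2 t t_{0}}{(1-\mu^2)(\iota\Omega+\omega\phi\Theta_{\rm s}^2)}$.

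Next I would merge the freshly introduced exponential $e^{z\cos\theta}$ with the pre-existing Gaussian-type factor $e^{-\frac{\omega\Omega\mu^2\Theta_{\rm s}^2 t_0^2+\omega\Omega\mu^2 t^2}{(1-\mu^2)(\iota\Omega+\omega\phi\Theta_{\rm s}^2)}}$. Applying the half-angle identity $\cos\theta=1-2\sin^2(\theta/2)$ splits $e^{z\cos\theta}=e^{z}\,e^{-2z\sin^2(\theta/2)}$; the factor $e^{z}$ absorbs into the pre-existing exponential and, after completing the square via $\Theta_{\rm s}^2 t_0^2+t^2-2\Theta_{\rm s}t t_0=(\Theta_{\rm s}t_0-t)^2$, reproduces precisely the term $e^{-\frac{(\sqrt{\omega\Omega}\mu\Theta_{\rm s}t_0-\sqrt{\omega\Omega}\mu t)^2}{(1-\mu^2)(\iota\Omega+\omega\phi\Theta_{\rm s}^2)}}$ appearing in \eqref{Outage_S_FAMA_No_Besel}, while $e^{-2z\sin^2(\theta/2)}=e^{-\frac{4\omega\Omega\Theta_{\rm s}\mu^2 t t_0}{(1-\mu^2)(\iota\Omega+\omega\phi\Theta_{\rm s}^2)}\sin^2(\theta/2)}$ remains under the $\theta$-integral. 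Carrying the $\tfrac{1}{\pi}$ from the Bessel representation outside and leaving every other constant of the $(q,p)$-summand untouched delivers \eqref{Outage_S_FAMA_No_Besel}.

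The only step needing a word of justification is the interchange that places the $\theta$-integral alongside the $t,t_0$ integrals; this is legitimate because $\theta$ lies in the compact set $[0,\pi]$ and, after the manipulation above, the integrand is a product of decaying exponentials and bounded algebraic/trigonometric factors (note $0\le e^{-2z\sin^2(\theta/2)}\le 1$), hence absolutely integrable by Fubini/Tonelli --- which is exactly why \eqref{Outage_S_FAMA_No_Besel} is numerically stable as $t,t_0\to\infty$ whereas \eqref{Outage_S_FAMA_Closedform} is not. No extra hypothesis beyond $\Omega>\omega$ (already assumed in Theorem~\ref{Theorem_Out_s_FAMA}) is required, and the result is an exact rewriting, not an approximation. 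I expect the argument to be essentially mechanical; the main obstacle is purely bookkeeping --- tracking the many powers, Pochhammer symbols and square-root constants in the double sum so that nothing is dropped when $I_{1-\omega+q+p}$ is swapped for its integral form.
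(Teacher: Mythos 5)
Your proposal is correct and follows essentially the same route as the paper: both substitute the integral representation $I_{\nu}(z)=\frac{1}{\pi}\int_{0}^{\pi}e^{z\cos\theta}\cos(\nu\theta)\,d\theta$ (A\&S 9.6.19) into \eqref{Outage_S_FAMA_Closedform}, split $e^{z\cos\theta}=e^{z}e^{-2z\sin^{2}(\theta/2)}$, and complete the square to produce the bounded factor $e^{-(\sqrt{\omega\Omega}\mu\Theta_{\rm s}t_{0}-\sqrt{\omega\Omega}\mu t)^{2}/[(1-\mu^{2})(\iota\Omega+\omega\phi\Theta_{\rm s}^{2})]}$. Your explicit remarks on the integer order $1-\omega+q+p$ and the Fubini/Tonelli interchange are extra rigor the paper leaves implicit, but the argument is the same.
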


\begin{proof}
See Appendix \ref{Proof_Corollary_Out_s_FAMA}.
\end{proof}

\begin{remark}\label{OP_fixed_antenna_Slow}
Under the MRT-$s$-FAMA scenario, the outage probability with $K=1$, or with users equipped with a single fixed antenna, is given by 
\begin{align}\label{fixed_user_Slow}
{\mathcal O}_{{\rm MRT}\text{-}s\text{-}{\rm FAMA}}
%{\mathcal O}_{{s-\rm MISO-FAS}}^{'} 
= \int_{0}^{\infty}\frac{2{\Omega}^{\Omega}t^{2\Omega-1}e^{-\frac{\Omega t^2}{\phi}}}{\Gamma(\Omega)\phi^{\Omega}}
P\left(\omega,\frac{\omega\Theta_{\rm s}^2}{\iota}t^2\right)dt.
\end{align}
This expression can be derived by the definition of outage probability with the result in Theorems \ref{g_k_Nakagami_RV} and \ref{g_k_s_Nakagami_RV}.
\end{remark}

\subsection{Special Cases} 
The expressions \eqref{Outage_f_FAMA_No_Besel} and \eqref{Outage_S_FAMA_No_Besel} provide the outage probability under interference-limited conditions, where noise power can be neglected. The following remark addresses the contrasting scenario, in which noise power dominates and the interference power can be safely ignored.

\begin{remark}

Considering the use of a non-dense frequency reuse scheme so that the channels for the users are orthogonal to each other, thereby resulting in a noise-limited system, the outage probability for both fast port switching and slow port switching can be expressed as
\begin{multline}\label{Outage_SNR}
{\mathcal O}_{\rm MRT-SNR} 
=\frac{2\omega^{\omega}}{\Gamma(\omega)\iota^{\omega}} \int_{0}^{\infty} t^{2\omega-1} e^{-\frac{\omega t^2}{\iota}} \\
\times \left[ 1 - Q_{\omega} \left( \sqrt{\frac{2\omega \mu^2 t^2}{\iota (1-\mu^2)}}, \sqrt{\frac{2\omega \sigma^2_\eta \gamma_{\mathrm{th}}^{\rm SNR}}{\iota \sigma_s^2 (1-\mu^2)}} \right) \right]^K dt,
\end{multline}
in which $\gamma_{\mathrm{th}}^{\rm SNR}$ is the preset threshold for the noise-limited scenario. This expression is the result after substituting $\tau_{1} = \cdots = \tau_{K} = \sqrt{\frac{\sigma^{2}_\eta \gamma_{\mathrm{th}}^{\rm SNR}}{ \sigma_{s}^2}}$ into \eqref{joint_CDF_signal_channel_formula}.
\end{remark}

\begin{remark}Under the noise-limited scenario, given $K=1$, or the user equipped with a single fixed antenna, the outage probability is given by
\begin{align}\label{fixed_user_SNR}
{\mathcal O}_{\rm MRT-SNR} = P\left(\omega,\frac{\omega\sigma_{\eta}^2\gamma_{\rm th}^{\rm SNR}}{\iota\sigma_{s}^2}\right).
\end{align}
This expression can be derived from the definition of outage probability and the result in Theorem \ref{g_k_Nakagami_RV}.   
\end{remark}

The expressions \eqref{Outage_f_FAMA_Closedform} and \eqref{Outage_f_FAMA_No_Besel} generalize the results in \cite{fast_FAMA} by extending the scenario to multiple transmitter antennas and varying large-scale fading, while \eqref{Outage_S_FAMA_Closedform} and \eqref{Outage_S_FAMA_No_Besel} generalize the results in \cite{slow_FAMA} with the same extensions. The same discussion applies to \eqref{Outage_SNR} in relation to the results in \cite{Fluid_antenna_system}. Notwithstanding, the expressions \eqref{Outage_f_FAMA_Closedform} and \eqref{Outage_f_FAMA_No_Besel} are slightly different from \cite[(17) \& (19)]{fast_FAMA} as different channel models and derivation methods were considered, which is the same for expressions \eqref{Outage_SNR} and \cite[(16)]{Fluid_antenna_system}. However, they still achieve the same result when considering a single transmitter antenna, ignoring large-scale fading, and employing the same correlation parameter $\mu$. Furthermore, the expressions \eqref{Outage_S_FAMA_Closedform} and \eqref{Outage_S_FAMA_No_Besel} can be simplified to \cite[(21) \& (22)]{slow_FAMA} under the assumptions of a single transmitter antenna and neglecting the large-scale fading effects.

\section{Numerical Results and Discussion}\label{sec:results}
Here, we provide the simulation results to evaluate the network performance of MRT-FAMA systems with fast and slow port switching. In the simulations, we set $\sigma  = \sigma_{s} = 1$, $\sigma_{\eta}= 10^{-4}$ and $W = 5$. The path loss exponent, $\alpha = 3$, is specified for an urban area cellular radio environment \cite[Table 3.2]{Path_Loss_Exponent}. The network parameters are also set as $N = 2$, $U = 3 $, $\gamma^{\rm [f]}_{\rm th} = \gamma^{\rm [s]}_{\rm th} = 1$, and $r_{0} = r_{i} = 100, \forall i$ unless otherwise specified. We provide Monte-Carlo simulation results and those obtained using the expressions \eqref{Outage_f_FAMA_No_Besel} and \eqref{Outage_S_FAMA_No_Besel}. Also, the results for using only a fixed antenna at each user (\ref{fixed_user_Slow}) are included as benchmark for comparison. 
			
Fig.~\ref{fig:Monte_SIR} provides the numerical results for the outage probability against the SIR threshold when there are $U=3$ interfering BSs and each user has a FAS with $K=10$ ports. First of all, the results demonstrate that the analytical expressions \eqref{Outage_f_FAMA_No_Besel} and \eqref{Outage_S_FAMA_No_Besel} align closely with the Monte-Carlo results, validating our analysis for both MRT-$f$-FAMA and MRT-$s$-FAMA. The slight discrepancy seen in the case of MRT-$s$-FAMA is due to the Gamma approximation in Theorem \ref{g_k_s_Nakagami_RV}. Also, note that the Gamma approximation in Theorem \ref{g_k_s_Nakagami_RV} will become exact if the interference distances $r_{i}$ are identical. Moreover, given two transmit antennas on each BS ($N = 2$), under the same threshold, MRT-$f$-FAMA outperforms MRT-$s$-FAMA, which is expected by using a much faster port switching strategy.
			
\begin{figure}
\centering
\includegraphics[width=1.0\linewidth]{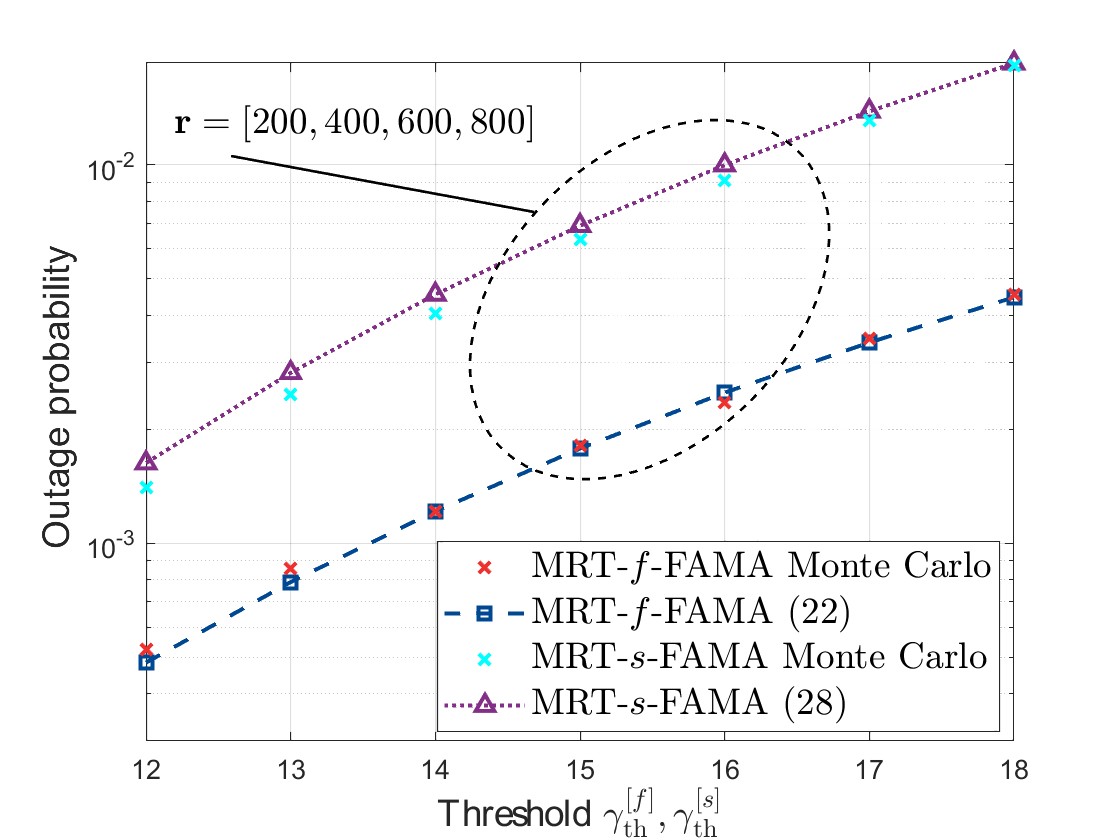}
\caption{Outage probability against the SIR threshold $\gamma^{\rm [f]}_{\rm th}, \gamma^{\rm [s]}_{\rm th}$, when $\bold{r} = [200,400,600,800]$, and the number of ports $K = 10$.}\label{fig:Monte_SIR}
\end{figure} 
			
As the results in Fig.~\ref{fig:Monte_SIR} have confirmed the accuracy of the analytical results, we now turn to the results of Fig.~\ref{fig:f_s_FAMA_N} to study the performance between MRT-$f$-FAMA and MRT-$s$-FAMA when the number of transmit antennas per BS, $N$, varies. The subplot displays the PDF of the signal \textcolor{black}{magnitude} and that for the interference \textcolor{black}{magnitude} with $f$-FAMA and $s$-FAMA when $N = 8$. As before, it is seen that MRT-$f$-FAMA outperforms MRT-$s$-FAMA if $N<3$. However, as $N$ increases, crossovers occur meaning that MRT-$s$-FAMA becomes a better option if $N$ is large. Specifically, when the SIR threshold $\gamma^{\rm [f]}_{\rm th} = \gamma^{\rm [s]}_{\rm th} = 18$, the crossover happens at $N=5$, but if $\gamma^{\rm [f]}_{\rm th} = \gamma^{\rm [s]}_{\rm th} = 14$, this happens at $N=4$. This indicates that as the SIR threshold increases, the number of transmit antennas per BS $N$ required for MRT-$s$-FAMA to outperform MRT-$f$-FAMA increases. 
			
This observation is counter-intuitive and deviates from our expectation that $f$-FAMA should always perform better than $s$-FAMA \cite{slow_FAMA}. Notably, this phenomenon only occurs when $N$ is large and in this case, the outage probability is relatively low, below $10^{-1}$. This can be elucidated by the PDFs of the signal and interference \textcolor{black}{magnitudes}, as depicted in the subplot. As we can see, given $N = 8$, the mean signal \textcolor{black}{magnitude} is much greater than that of the interference \textcolor{black}{magnitude} for both $f$-FAMA and $s$-FAMA cases. Additionally, the interference \textcolor{black}{magnitude} for $f$-FAMA has a lower mean value but higher variance, whereas for the $s$-FAMA scenario, the interference \textcolor{black}{magnitude} exhibits an opposite behavior. To understand that, based on Lemma \ref{Lemma_f_Interference}, the mean and variance of the interference \textcolor{black}{magnitude} for $f$-FAMA can be found as
\begin{equation}
\left\{\begin{aligned}
\mathrm{E}\left[\left|g_{k}^{\rm I}\right|\right]& = \frac{\sigma_{\rm I}}{2}\sqrt{\pi},\\
\mathrm{Var}\left[\left|g_{k}^{\rm I}\right|\right] &= \sigma_{\rm I}^2 \left(\frac{4-\pi}{4}\right).
\end{aligned} \right.
\end{equation}
By contrast, according to Theorem \ref{g_k_s_Nakagami_RV} and \cite[(17) \& (18)]{nakagami1960m}, the mean and variance of the interference \textcolor{black}{magnitude} for $s$-FAMA can be expressed as
\begin{equation}
\left\{\begin{aligned}
\mathrm{E}\left[\sigma_{s}\left|g_{k}^{\rm {[s]}} \right|\right] &= \sigma_{s}\frac{\Gamma(\Omega+\frac{1}{2})}{\Gamma(\Omega)}\sqrt{\frac{\phi}{\Omega}},\\
\mathrm{Var}\left[\sigma_{s}\left|g_{k}^{\rm {[s]}} \right|\right] &= \sigma_{s}^2\phi\left(1-\frac{1}{\Omega}\left(\frac{\Gamma(\Omega+\frac{1}{2})}{\Gamma(\Omega)}\right)^2 \right) \approx \sigma_{s}^2\frac{\phi}{5\Omega}.
\end{aligned} \right.
\end{equation}
Consequently, the ratio between the mean interference \textcolor{black}{magnitude} of $f$-FAMA and that of $s$-FAMA is given by
\begin{align}
\frac{\mathrm{E}\left[\left|g_{k}^{\rm I}\right|\right]}{\mathrm{E}\left[\sigma_{s}\left|g_{k}^{\rm {[s]}} \right|\right]} = \frac{\sqrt{\pi}}{2}\frac{\Gamma(\Omega)\sqrt{\Omega}}{\Gamma(\Omega+\frac{1}{2})}.
\end{align}
Notice that $\frac{\Gamma(\Omega)\sqrt{\Omega}}{\Gamma(\Omega+\frac{1}{2})}$ is monotonically decreasing with respect to $\Omega$. Considering a single interfering BS case ($\Omega = 1$), the ratio $ \frac{\mathrm{E}\left[\left|g_{k}^{\rm I}\right|\right]}{\mathrm{E}\left[\left|g_{k}^{\rm {[s]}} \right|\right]}$ reaches its maximum value of 1.  Therefore, it can be proved that the mean interference \textcolor{black}{magnitude} for $f$-FAMA is lower than that for $s$-FAMA. Similarly, the ratio between the variances is given by
\begin{align}
\frac{\mathrm{Var}\left[\left|g_{k}^{\rm I}\right|\right]}{\mathrm{Var}\left[\sigma_{s}\left|g_{k}^{\rm {[s]}} \right|\right]} = 5\Omega\left(\frac{4-\pi}{4}\right).
\end{align}
Given $\Omega > 1$, it follows that the variance of the interference \textcolor{black}{magnitude} in $f$-FAMA is higher than that in $s$-FAMA.
			
To sum up, while the mean signal \textcolor{black}{magnitude} is much larger than the mean interference \textcolor{black}{magnitude}, the higher variance in $f$-FAMA could degrade its performance more than in $s$-FAMA when the number of transmit antennas per BS is large. 
				
\begin{figure}
\centering
\includegraphics[width=1.0\linewidth]{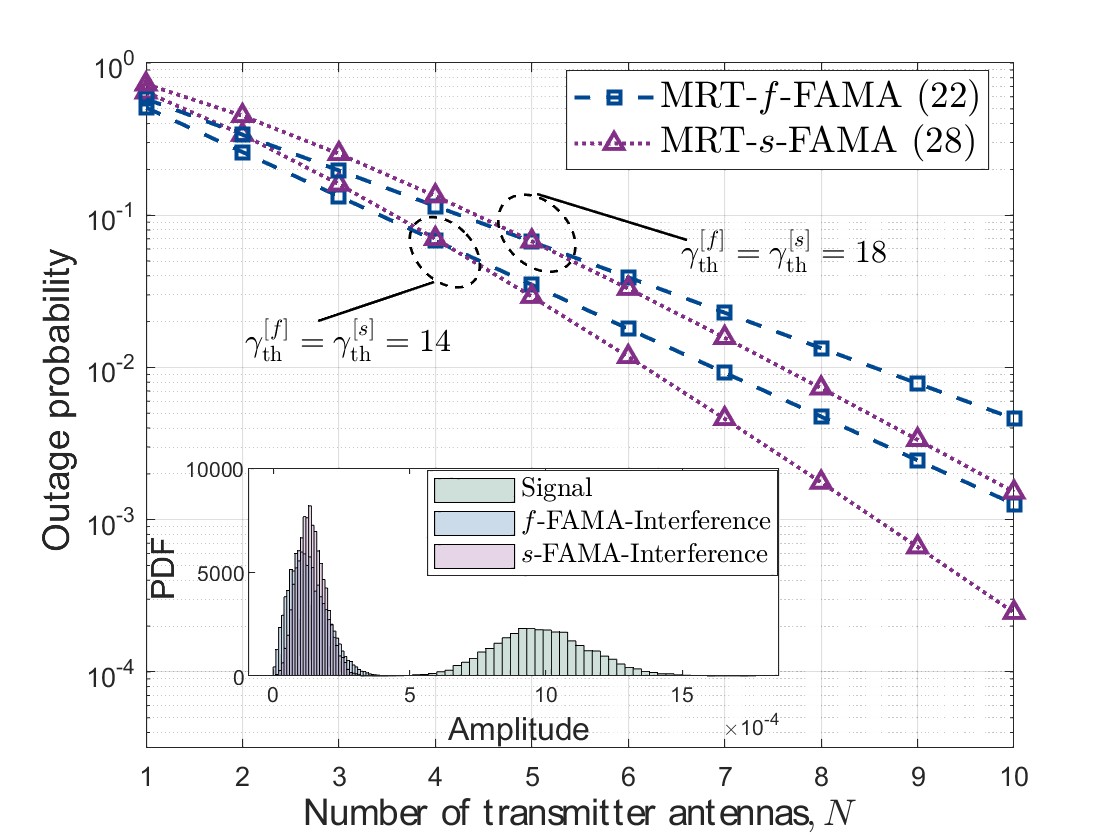}
\caption{Outage probability against the number of transmit antennas per BS $N$, when $\bold{r} = [200,400,600,800]$, the number of ports $K= 2$. The subplot shows the PDF of the signal \textcolor{black}{magnitude} and various interference \textcolor{black}{magnitude} when $K=1$ and $N=8$.}\label{fig:f_s_FAMA_N}
\end{figure}
			
Now, the results in Fig.~\ref{fig:f_s_FAMA_K} investigates the outage probability performance for both MRT-$f$-FAMA and MRT-$s$-FAMA when the number of FAS ports, $K$, at each user changes. We also provide the results for the case when each user uses one fixed antenna, instead of FAS, for comparison. The results illustrate that to achieve an outage probability below $10^{-3}$, a user with a single fixed antenna needs $15$ transmit antennas per BS for MRT, whereas a user with a $12$-port FAS needs only $2$ transmit antennas per BS for MRT. In the case with with $4$ transmit BS antennas, a user equipped with a $20$-port FAS can achieve an outage probability below $10^{-8}$, but a fixed antenna user requires $35$ BS antennas to attain the same performance. This shows a huge advantage of using FAS at each user in reducing the burden of BS. On the other hand, the results also indicate that as the number of ports $K$ increases, the performance gap between MRT-$f$-FAMA and MRT-$s$-FAMA widens but with the increase of the number of BS antennas $N$, the performance gap decreases, as has been reported earlier.

\begin{figure}
\centering
\includegraphics[width=1.0\linewidth]{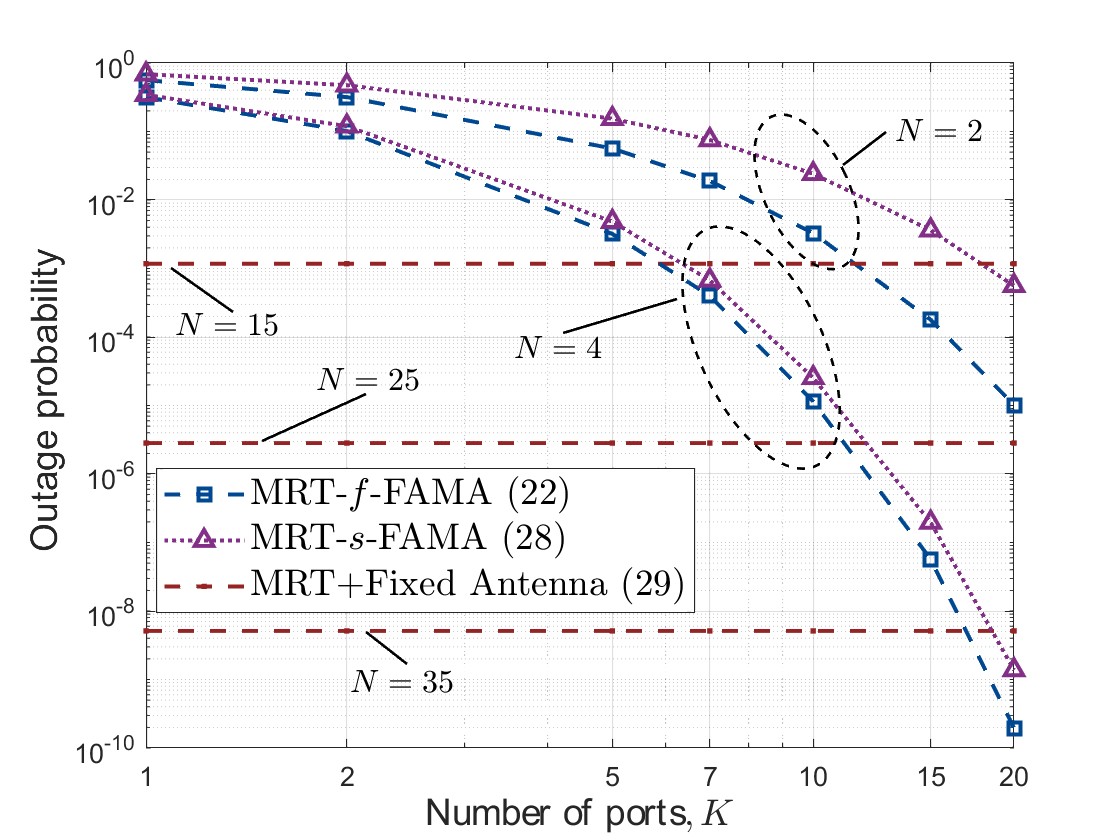}
\caption{Outage probability for different approaches against the number of ports $K$, given different number of transmit antennas per BS $N$.}\label{fig:f_s_FAMA_K}
\end{figure}

\begin{figure}
\centering
\includegraphics[width=1.0\linewidth]{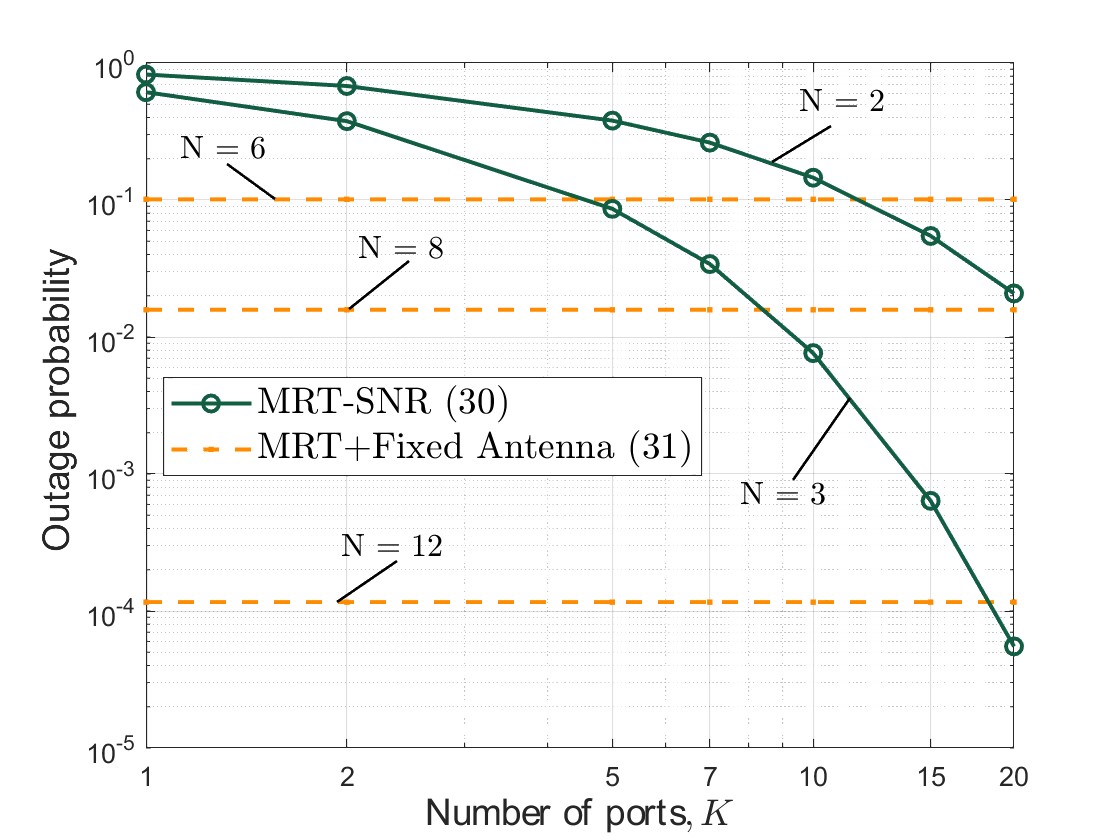}
\caption{Outage probability against the number of ports $K$, given different number of transmit antennas per BS $N$, under a noise-limited system with a signal-to-noise ratio threshold of $\gamma_{\mathrm{th}}^{\rm SNR} = 25$ dB.}\label{fig:OP_K_SNR}
\end{figure}

Similar results are presented in Fig.~\ref{fig:OP_K_SNR}, where the outage probability is plotted against the number of FAS ports $K$ at each user. For comparison, the performance of a user with a fixed antenna is also included. The results demonstrate that for a user with 20-port FAS, only $2$ transmit antennas per BS for MRT are enough to get an outage probability of approximately $10^{-2}$. The fixed-antenna user by contrast requires $4$ transmit antennas per BS to achieve the same performance. To achieve an outage probability below $10^{-4}$, a user with 20-port FAS requires $3$ transmit antennas per BS, whereas the fixed-antenna user needs $6$ antennas per BS. It is noticed that the reduction in the number of transmitter antennas per BS is not as significant compared to previous results where interference exists. Despite this, the presence of FAS at the user side still reduces the burden at the BS under noise-limited conditions. 

The results in Fig.~\ref{fig:f_s_FAMA_W} examine the effects of different FAS sizes on the outage probability when the number of interfering BSs, $U$, changes. Notably, when there is only one interfering BS ($U = 1$), the performance of MRT-$f$-FAMA and MRT-$s$-FAMA are identical. This occurs because with $U = 1$, the interference model in MRT-$s$-FAMA simplifies to that in MRT-$f$-FAMA, as explained in Remark \ref{Remark_Nagami_Rayleigh}. Furthermore, with $K = 60$ and two interfering BSs ($U = 2$), the outage probabilities for both MRT-$f$-FAMA and MRT-$s$-FAMA converge at $W = 3 \lambda$. In contrast, with $U = 1$, the outage probability does not converge until $W = 5 \lambda$. This observation suggests that the performance of FAS is influenced not only by the number of ports $K$ but also significantly by its size $W$. Moreover, while the number of interfering BSs is relatively small, increasing $W$ for a given $K$ can also enhance the performance.

\begin{figure}
\centering
\includegraphics[width=1.0\linewidth]{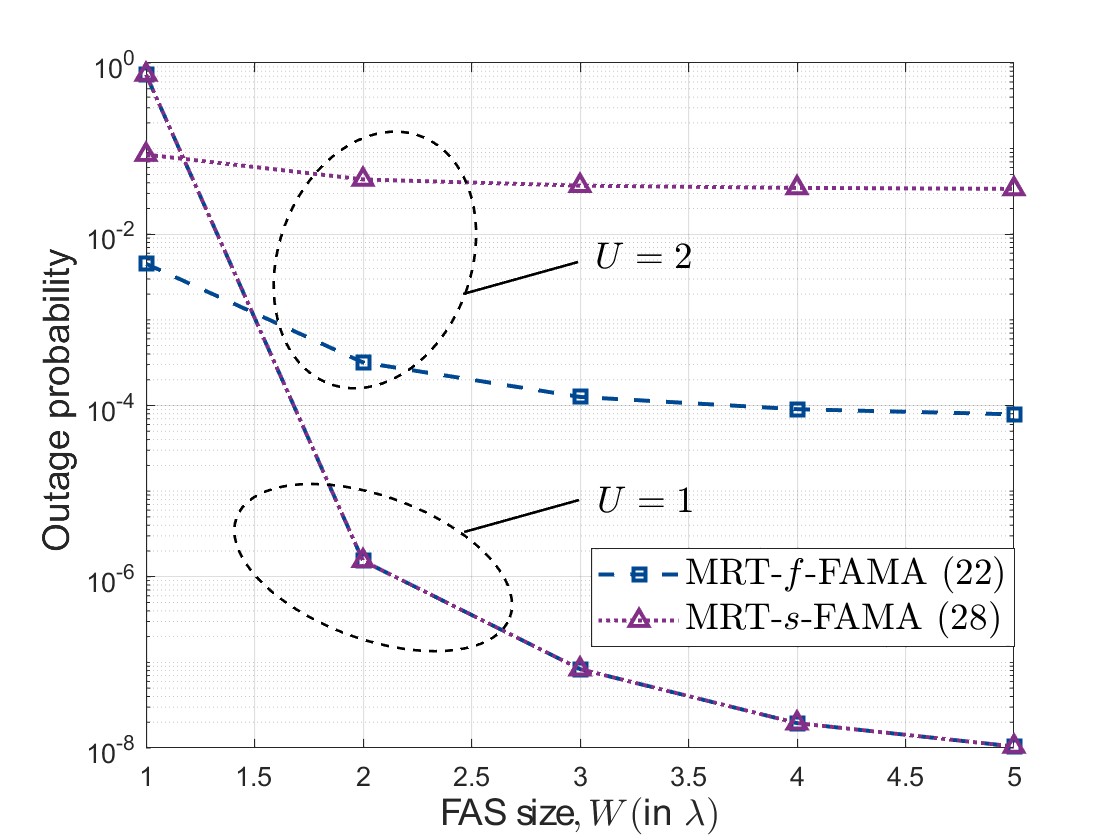}
\caption{Outage probability against the normalized FAS size $W$ (in the number of $\lambda$), given different number of interfering BSs $U$, when the number of ports $K= 60$ and the SIR threshold $\gamma^{\rm [f]}_{\rm th}= \gamma^{\rm [s]}_{\rm th} = 6$.}\label{fig:f_s_FAMA_W}
\end{figure} 

Finally, Fig.~\ref{fig:OP_U_threshold_3} shows how the outage probability is affected by the number of interfering BSs $U$, when the number of ports $K$ is set to $25$ and $30$, respectively. Apparently, as $U$ increases, the performance for both MRT-$f$-FAMA and MRT-$s$-FAMA decreases significantly, as also illustrated in Fig.~\ref{fig:f_s_FAMA_W}. Also, the results show that given the same number of interfering BSs, MRT-$s$-FAMA with $K = 30$ significantly outperforms MRT-$f$-FAMA with $K = 25$. This demonstrates that with a large number of ports $K$, MRT-$s$-FAMA can be as effective as MRT-$f$-FAMA. But when the number of interfering BSs increases, given the same number of ports, the performance gap between MRT-$f$-FAMA and MRT-$s$-FAMA still increases.

\begin{figure}
\centering
\includegraphics[width=1.0\linewidth]{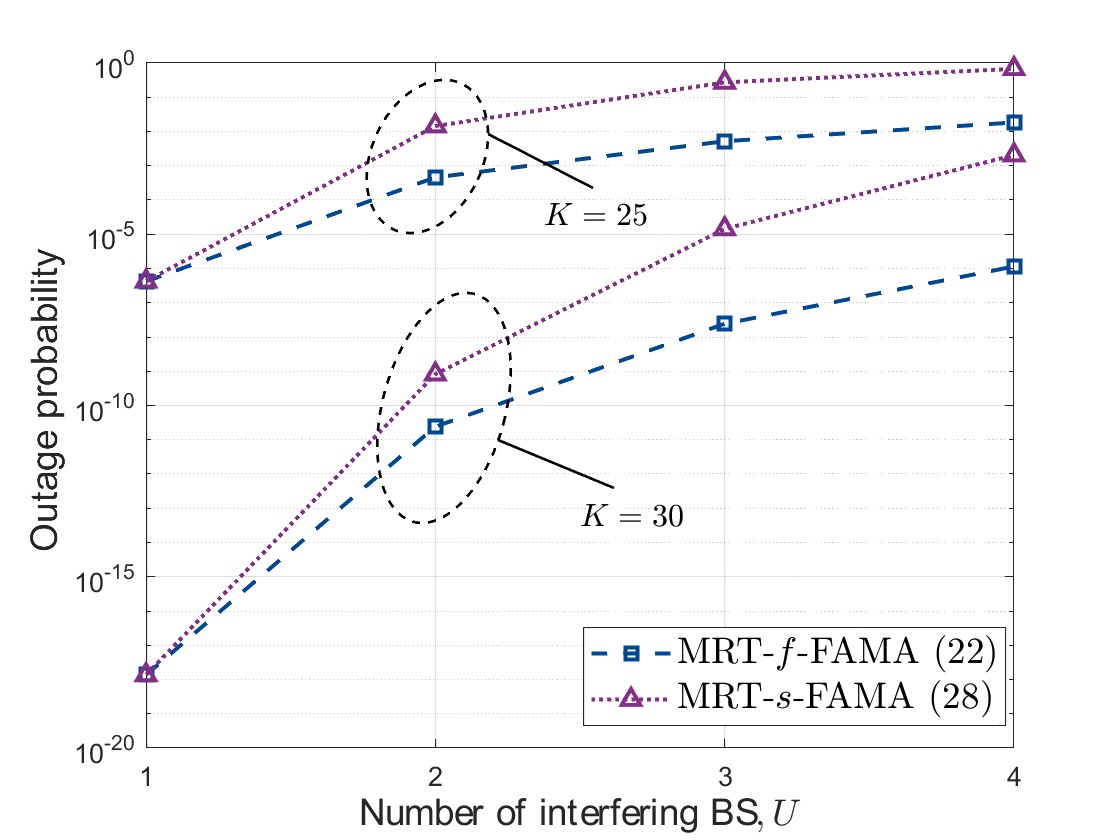}
\caption{Outage probability against the number of interfering BSs $U$, given different number of ports $K$, when the SIR threshold $\gamma^{\rm [f]}_{\rm th}= \gamma^{\rm [s]}_{\rm th} = 3$.}\label{fig:OP_U_threshold_3}
\end{figure} 

\section{Conclusion}\label{sec:conclude}
This paper considered a cell-free FAMA network in which the BS uses MRT precoding and each user adopts a FAMA approach to combat fading and interference in the downlink. This work is important in understanding the synergy between precoding and FAMA. Our emphasis was on the interference-limited environments where we derived the outage probability expressions for both MRT-$f$-FAMA and MRT-$s$-FAMA. Our results showed that while MRT-$f$-FAMA outperforms MRT-$s$-FAMA with fewer BS antennas, MRT-$s$-FAMA excels when the number of transmit antennas per BS is large. This result suggests that $s$-FAMA can be an attractive solution to combine with MRT precoding at the BS. Therefore, incorporating FAS on the user side can alleviate the CSI burden on BSs, thereby improving the scalability of cell-free networks. Further research could explore the application of FAS with alternative precoding methods, such as zero-forcing when more CSI can be afforded. Additionally, future work might investigate the performance of FAS under imperfect precoding scenarios.

\appendix
\subsection{Proof of Theorem \ref{g_k_Nakagami_RV}}\label{appendix:Proof_g_k_Nakagami_RV}
Note that $\sum_{n=1}^{N}\left |  h_{n,k} \right |^2$ is a sum over a set of uncorrelated squared Rayleigh random variables, which follows a Gamma distribution, with a PDF given by
\begin{align}
f_{\sum_{n=1}^{N}\left |  h_{n,k} \right |^2}(\tau) = {\rm Gamma}(\tau; N,\sigma^2).
\end{align}
According to \eqref{MRT_signal_port_k}, $\left|g_{k}\right|^2 = r_{0}^{-\alpha}\sum_{n=1}^{N}\left |  h_{n,k} \right |^2$. Utilizing the scaling property of the Gamma distribution, we obtain
\begin{align}
f_{\left|g_{k}\right|^2}(\tau)  = {\rm Gamma}(\tau; N,r_{0}^{-\alpha}\sigma^2),
\end{align}
and the correlation parameter between the channels $\left|g_{0}\right|^2$ and $\left|g_{k}\right|^2$ can be calculated as
\begin{align}\label{correlation_calculation_signal}
\rho_{\left|g_{0}\right|^2,\left|g_{k}\right|^2} = \frac{\mathrm{E}[\left|g_{0}\right|^2 \left|g_{k}\right|^2] - \mathrm{E}[\left|g_{0}\right|^2]\mathrm{E}[\left|g_{k}\right|^2]}{\sqrt{\mathrm{Var}[\left|g_{0}\right|^2] \mathrm{Var}[\left|g_{k}\right|^2]}},
\end{align}
where 
\begin{equation}
\left\{\begin{aligned}
{\rm{E}}[{\left| {{g_0}} \right|^2}{\left| {{g_k}} \right|^2}] &= r_0^{ - 2\alpha }N{\mu ^2}{\sigma ^4} + r_0^{ - 2\alpha }{N^2}{\sigma ^4},\\
{\rm{E}}[{\left| {{g_0}} \right|^2}] = {\rm{E}}[{\left| {{g_k}} \right|^2}] &= r_0^{ - 2\alpha }N{\sigma ^2},\\
{\rm{Var}}[{\left| {{g_0}} \right|^2}] = {\rm{Var}}[{\left| {{g_k}} \right|^2}] &= r_0^{ - 2\alpha }N{\sigma ^4}.
\end{aligned}\right.
\end{equation}
Consequently, we have $\rho_{\left|g_{0}\right|^2,\left|g_{k}\right|^2} = \mu^2$. The squared root of the Gamma distribution follows a Nakagami distribution, where the shape parameter $\omega = N$ remains unchanged and the spread parameter $\iota = r_{0}^{-\alpha}N\sigma^2$. As for the Nakagami distribution, it is required that its shape parameter $\omega \geq 0.5$ and spread parameter $\iota>0$. It is evident that $\omega = N \geq 1$. Additionally, $\iota = N\sigma^2\sum_{i=1}^{U}r_{i}^{-\alpha} >0$. Following \cite[(10)]{rician_rv}, the correlation parameter between $\left|g_{0}\right|$ and $\left|g_{k}\right|$ is the same as \eqref{correlation_calculation_signal}, given by  $\rho_{\left|g_{0}\right|,\left|g_{k}\right|} = \mu^2$. This ends the proof.

\subsection{Proof of Theorem \ref{Theorem_Out_f_FAMA}}\label{Proof_Out_f_FAMA}
The outage probability of MRT-$f$-FAMA can be expressed as \eqref{Appendix_A_1}, as shown at the top of next page, where $(a)$ comes directly from \eqref{OUT_f_FAMA}, $(b)$ is based on the law of total probability, where $F_{\left| g_{1}\right|, \dots, \left| g_{K}\right|}(\Theta_{\rm f} t_{1}, \dots, \Theta_{\rm f} t_{K}|t_{1},\dots, t_{K})$ is the joint CDF of $\left| g_{1}\right|, \dots, \left| g_{K}\right|$ conditioned on $t_{1},\dots,t_{K}$ given in Theorem \ref{joint_CDF_signal_channel}, $f_{\left| g_{1}^{\rm I} \right|, \dots, \left| g_{K}^{\rm I} \right|}(t_{1}, \dots, t_{K})$ is the joint PDF of $\left| g_{1}^{\rm I} \right|, \dots, \left| g_{K}^{\rm I} \right|$ given in \cite[Theorem 1]{Fluid_antenna_system}, $(c)$ has substituted these results, and finally, based on Tonelli's theorem \cite{Tonelli_theorem}, $(d)$ has changed the order of integration.  
\begin{figure*}[]
\centering
\begin{align}\label{Appendix_A_1}
{\mathcal O}_{{\rm MRT}\text{-}f\text{-}{\rm FAMA}} 
&\overset{(a)}{=} \mathbb{P}\left\{ \left| g_{1}\right| < \Theta_{\rm f} \left| g_{1}^{\rm I} \right|, \dots, \left| g_{K}\right| < \Theta_{\rm f} \left| g_{K}^{\rm I} \right| \right\}  \nonumber \\
&\overset{(b)}{=}\int_{t_K = 0}^{\infty} \cdots \int_{t_1 = 0}^{\infty} F_{\left| g_{1}\right|, \dots, \left| g_{K}\right|}(\Theta_{\rm f} t_{1}, \dots, \Theta_{\rm f} t_{K}|t_{1},\dots,t_{K}) \times f_{\left| g_{1}^{\rm I} \right|, \dots, \left| g_{K}^{\rm I} \right|}(t_{1}, \dots, t_{K})dt_{1}\dots dt_{K}\nonumber \\
& \overset{(c)}{=} \int_{t_K = 0}^{\infty} \cdots \int_{t_1 = 0}^{\infty} \frac{2\omega^{\omega}}{\Gamma(\omega)\iota^{\omega}} \int_{t = 0}^{\infty}t^{2\omega-1}e^{-\frac{\omega t^2}{\iota}}\prod_{k=1}^{K}\left[1- Q_{\omega}\left(\sqrt{\frac{2\omega\mu^2t^2}{\iota(1-\mu^2)}}, \sqrt{\frac{2\omega{\Theta^2_{\rm f}}t_{k}^2}{\iota(1-\mu^2)}}\right)\right]dt\nonumber \\
&\quad\quad\quad\quad\quad \times \int_{t_{0}=0}^{\infty}\frac{2t_{0}}{\sigma_{\rm I}^2}e^{-\frac{t_{0}^2}{\sigma_{\rm I}^2}}\prod_{k=1}^{K}\frac{2t_{k}}{\sigma_{\rm I}^2(1-\mu^2)}e^{-\frac{t_{k}^2+\mu^2t_{0}^2}{\sigma_{\rm I}^2(1-\mu^2)}}I_{0}\left(\frac{2\mu t_{0}t_{k}}{\sigma_{\rm I}^2(1-\mu^2)}\right)dt_{0}dt_{1}\dots dt_{K}
\nonumber \\
&\overset{(d)}{=}\frac{2\omega^{\omega}}{\Gamma(\omega)\iota^{\omega}}\int_{t_0 = 0}^{\infty}\frac{2t_{0}}{\sigma_{\rm I}^2}e^{-\frac{t_{0}^2}{\sigma_{\rm I}^2}}\int_{t = 0}^{\infty}t^{2\omega-1}e^{-\frac{\omega t^2}{\iota}} \left(\frac{e^{-\frac{\mu^2t_{0}^2}{\sigma_{\rm I}^2(1-\mu^2)}}}{\sigma_{\rm I}^2(1-\mu^2)}\right)^K \nonumber \\
&\quad\times\prod_{k=1}^{K}\left\{\int_{t_k = 0}^{\infty} 2t_{k}e^{-\frac{t_{k}^2}{\sigma_{\rm I}^2(1-\mu^2)}}\left[ 1- Q_{\omega}\left(\sqrt{\frac{2\omega\mu^2t^2}{\iota(1-\mu^2)}}, \sqrt{\frac{2\omega{\Theta^2_{\rm f}}t_{k}^2}{\iota(1-\mu^2)}}\right) \right]I_{0}\left(\frac{2\mu t_{0}t_{k}}{\sigma_{\rm I}^2(1-\mu^2)}\right)dt_{k}
\right\}dt dt_{0}
\end{align}
\hrulefill
\end{figure*}

We first evaluate the integration over $t_{k}$. According to \cite[(2)]{negative_MArcum_Q}, we have 
\begin{multline}\label{negative_Marcum_Q}
1- Q_{\omega}\left(\sqrt{\frac{2\omega\mu^2t^2}{\iota(1-\mu^2)}}, \sqrt{\frac{2\omega{\Theta^2_{\rm f}}t_{k}^2}{\iota(1-\mu^2)}}\right)\\ 
=Q_{1-\omega}\left(\sqrt{\frac{2\omega{\Theta^2_{\rm f}}t_{k}^2}{\iota(1-\mu^2)}}, \sqrt{\frac{2\omega\mu^2t^2}{\iota(1-\mu^2)}}\right).
\end{multline}
Therefore, the integration over $t_{k}$ can be expressed as
\begin{multline}\label{integration_Marum_Q_Bessel}
\int_{t_k = 0}^{\infty}2t_{k}e^{-\frac{t_{k}^2}{\sigma_{\rm I}^2(1-\mu^2)}}Q_{1-\omega}\left(\sqrt{\frac{2\omega{\Theta^2_{\rm f}}t_{k}^2}{\iota(1-\mu^2)}}, \sqrt{\frac{2\omega\mu^2t^2}{\iota(1-\mu^2)}}\right)\\
\times I_{0}\left(\frac{2\mu t_{0}t_{k}}{\sigma_{\rm I}^2(1-\mu^2)}\right)dt_{k}.
\end{multline}
With \cite[Proposition 1]{integration_marcum_Q_bessel}, we obtain the integral-form of \eqref{integration_Marum_Q_Bessel}, given in \eqref{closed_form_integration_Marcum_Bessel_1}, as shown at the top of next page.
\begin{figure*}[]
\begin{align}\label{closed_form_integration_Marcum_Bessel_1}
& \int_{t_k = 0}^{\infty}2t_{k}e^{-\frac{t_{k}^2}{\sigma_{\rm I}^2(1-\mu^2)}}Q_{1-\omega}\left(\sqrt{\frac{2\omega{\Theta^2_{\rm f}}t_{k}^2}{\iota(1-\mu^2)}}, \sqrt{\frac{2\omega\mu^2t^2}{\iota(1-\mu^2)}}\right) I_{0}\left(\frac{2\mu t_{0}t_{k}}{\sigma_{\rm I}^2(1-\mu^2)}\right)dt_{k} \nonumber \\
&= \sigma_{\rm I}^2(1-\mu^2)e^{\frac{\mu^2 t_{0}^2}{\sigma_{\rm I}^2(1-\mu^2)}}Q_{1}\left(\sqrt{\frac{2\omega{\Theta^2_{\rm f}}\mu^2t_{0}^2}{(1-\mu^2)(\iota +\omega{\Theta^2_{\rm f}}\sigma_{\rm I}^2)}},\sqrt{\frac{2\omega\mu^2t^2}{(1-\mu^2)(\iota+\omega{\Theta^2_{\rm f}}\sigma_{\rm I}^2)}} \right)-\frac{\sigma_{\rm I}^2(1-\mu^2)}{\mu t_{0}}\left(\frac{\mu\iota t_{0}}{\iota+\omega{\Theta^2_{\rm f}}\sigma_{\rm I}^2} \right)^a  \nonumber \\
& \quad\quad \times\left( \frac{\mu t}{\Theta_{\rm f}} \right)^{1-\omega}
e^{\frac{\iota\mu^2t_{0}^2}{\sigma_{\rm I}^2(1-\mu^2)(\iota+\omega\Theta^2_{\rm f}\sigma_{\rm I}^2)}}
e^{-\frac{\omega\mu^2t^2}{(1-\mu^2)(\iota+\omega\Theta^2_{\rm f}\sigma_{\rm I}^2)}} \sum_{q=0}^{a-1}\sum_{p=0}^{a-q-1}  \left(\frac{\sigma_{\rm I}^2 t}{t_{0}} \sqrt{\frac{\omega (1-\mu^2)}{2\iota }} \right)^{q+p} \left(\frac{\iota+\omega{\Theta^2_{\rm f}}\sigma_{\rm I}^2}{\Theta_{\rm f}\sigma_{\rm I}^2}\sqrt{\frac{2}{\omega \iota(1-\mu^2)}} \right)^q \nonumber \\
&\quad\quad\quad\quad\quad\quad\quad\quad\quad \times  \left(\sqrt{\frac{2\omega{\Theta^2_{\rm f}}}{\iota(1-\mu^2)}} \right)^{p}\frac{(a-q-p)_{p}}{p!}I_{1-\omega+q+p}\left( \frac{2\omega\Theta_{\rm f}\mu^2tt_{0}}{(1-\mu^2)(\iota+\omega{\Theta^2_{\rm f}}\sigma_{\rm I}^2)}\right)
\end{align}
\hrulefill
\end{figure*}

Finally, submitting the result in \eqref{closed_form_integration_Marcum_Bessel_1} into \eqref{Appendix_A_1}, the outage probability of MRT-$f$-FAMA can be expressed as \eqref{Outage_f_FAMA_Closedform}.

\subsection{Proof of Corollary \ref{Corollary_Out_f_FAMA}}\label{Proof_Corollary_Out_f_FAMA}
According to \cite[(9.6.19)]{intergation_form_bessel_function}, we have
\begin{align}\label{integration_form_bessel}
I_{1-\omega+q+p}(\tau) = \frac{1}{\pi}\int_{0}^{\pi} \cos{(\theta(1-\omega+q+p))}e^{\tau\cos{\theta}}d\theta.
\end{align}
Then by combining the exponential terms in \eqref{Outage_f_FAMA_Closedform} with \eqref{integration_form_bessel}, we have
\begin{align}\label{Modified_Bessel_exp_f_FAMA}
&e^{-\frac{\omega\mu^2\Theta_{\rm f}^2t_{0}^2+\omega\mu^2t^2}{(1-\mu^2)(\iota+\omega\Theta_{\rm f}^2\sigma_{\rm I}^2)}}I_{1-\omega+q+p}\left( \frac{2\omega\Theta_{\rm f}\mu^2tt_{0}}{(1-\mu^2)(\iota+\omega\Theta_{\rm f}^2\sigma_{\rm I}^2)}\right) \nonumber \\
= &\,e^{-\frac{(\sqrt{\omega}\mu\Theta_{\rm f}t_{0}-\sqrt{\omega}\mu t)^2}{(1-\mu^2)(\iota+\omega\Theta_{\rm f}^2\sigma_{\rm I}^2)}}\nonumber\\
&\times\frac{1}{\pi}\int_{0}^{\pi} \cos{(\theta(1-\omega+q+p))}e^{- \frac{4\omega\Theta_{\rm f}\mu^2tt_{0}}{(1-\mu^2)(\iota+\omega\Theta_{\rm f}^2\sigma_{\rm I}^2)}\sin^2{\frac{\theta}{2}}}d\theta. 
\end{align}
Submitting \eqref{Modified_Bessel_exp_f_FAMA} into \eqref{Outage_f_FAMA_Closedform} finally gives \eqref{Outage_f_FAMA_No_Besel}.

\subsection{Proof of Theorem \ref{g_k_s_Nakagami_RV}}\label{appendix:Proof_g_k_s_Nakagami_RV}
According to \eqref{interference_channel}, the channel $h_{k}^{(i)}$ follows a Rayleigh distribution and therefore, $\left | h_{k}^{(i)} \right |^2$ follows a Gamma distribution. Due to the scaling property of Gamma distribution, the PDF of $\left |r_{i}^{-\frac{\alpha}{2}} h_{k}^{(i)} \right |^2$ is given by
\begin{align}
f_{\left |r_{i}^{-\frac{\alpha}{2}} h_{k}^{(i)} \right |^2} (\tau) = {\rm{Gamma}}(\tau;1,r_{i}^{-\alpha}\sigma^2).
\end{align}
Therefore, $\left|g_{k}^{\rm [s]} \right|^2 = \sum_{i=1}^{U}\left |r_{i}^{-\frac{\alpha}{2}} h_{k}^{(i)} \right |^2$ is a sum over a set of uncorrelated Gamma random variable. Following \cite[(6)]{2000_sum_Gamma}, \cite[(2)]{2007_sum_Gamma} and \cite[Proposition 8]{2011_sum_Gamma}, $\left|g_{k}^{\rm [s]} \right|^2$ can be approximated by a Gamma distribution with the PDF given by
\begin{align}
f_{\left|g_{k}^{\rm [s]} \right|^2}(\tau) = {\rm Gamma}(\tau;\Omega, \nu),
\end{align}
where $\Omega =  \frac{(\sum_{i=1}^{U}r_{i}^{-\alpha})^2}{\sum_{i=1}^{U}r_{i}^{-2\alpha}}$ and $\nu = \frac{\sigma^2\sum_{i=1}^{U}{r_{i}^{-2\alpha}}}{\sum_{i=1}^{U}r_{i}^{-\alpha}}$. The correlation parameter between $\left|g_{0}^{[\rm {s}]}\right|^2$ and $\left|g_{k}^{[\rm {s}]}\right|^2$ is calculated as
\begin{align}\label{correlation_calculation}
\rho_{\left|g_{0}^{[\rm {s}]}\right|^2,\left|g_{k}^{[\rm {s}]}\right|^2} = \frac{\mathrm{E}\left[\left|g_{0}^{[\rm {s}]}\right|^2 \left|g_{k}^{[\rm {s}]}\right|^2\right] - \mathrm{E}\left[\left|g_{0}^{[\rm {s}]}\right|^2\right]\mathrm{E}\left[\left|g_{k}^{[\rm {s}]}\right|^2\right]}
{\sqrt{\mathrm{Var}\left[\left|g_{0}^{[\rm {s}]}\right|^2\right] \mathrm{Var}\left[\left|g_{k}^{[\rm {s}]}\right|^2\right]}},
\end{align}
where 
\begin{equation}
\left\{\begin{aligned}
{\rm{E}}\left[{\left| {{g_0}^{[\rm {s}]}} \right|^2}{\left| {{g_k}^{[\rm {s}]}} \right|^2}\right] &= N\mu^2\sigma^4\sum_{i=1}^{U}r_{i}^{-2\alpha}\\ 
&\quad +N^2\sigma^4\left(\sum_{i=1}^{U}r_{i}^{-\alpha}\right)^2,\\
{\rm{E}}\left[{\left| {{g_0}^{[\rm {s}]}} \right|^2}\right] &= {\rm{E}}\left[{\left| {{g_k}^{[\rm {s}]}} \right|^2}\right] = N\sigma^2\sum_{i=1}^{U}r_{i}^{-\alpha},\\
{\rm{Var}}\left[{\left| {{g_0}^{[\rm {s}]}} \right|^2}\right] &= {\rm{Var}}\left[{\left| {{g_k}^{[\rm {s}]}} \right|^2}\right] =N\sigma^4\sum_{i=1}^{U}r_{i}^{-2\alpha}.
\end{aligned}\right.
\end{equation}
Therefore, we have $\rho_{\left|g_{0}^{[\rm {s}]}\right|^2,\left|g_{k}^{[\rm {s}]}\right|^2} = \mu^2$. The squared root of the Gamma distribution follows a Nakagami distribution, where the shape parameter remains unchanged and the spread parameter is given by $\phi = \nu\Omega = \sigma^2\sum_{i=1}^{U}r_{i}^{-\alpha}$. Following \cite[(10)] {rician_rv}, the correlation parameter between $\left|g_{0}^{[\rm {s}]}\right|$ and $\left|g_{k}^{[\rm {s}]}\right|$ is the same as \eqref{correlation_calculation}, i.e., $\rho_{\left|g_{0}^{[\rm {s}]}\right|,\left|g_{k}^{[\rm {s}]}\right|} = \mu^2$.

\subsection{Proof of Theorem \ref{Theorem_Out_s_FAMA}}\label{Proof_Out_s_FAMA}
Similar to Appendix \ref{Proof_Out_f_FAMA}, the outage probability for MRT-$s$-FAMA can be expressed as \eqref{Appendix_B_1}, as shown at the top of next page, where $(a)$ comes from \eqref{OUT_s_FAMA}, $(b)$ is due to the law of total probability, where $F_{\left| g_{1}\right|, \dots, \left| g_{K}\right|}(\Theta_{\rm s} t_{1}, \dots, \Theta_{\rm s} t_{K}|t_{1},\dots, t_{K})$ is the CDF of $\left| g_{1}\right|, \dots, \left| g_{K}\right|$ conditioned on $t_{1},\dots,t_{K}$, given in Theorem \ref{joint_CDF_signal_channel}, and $f_{\left| g_{1}^{\rm [s]} \right|, \dots, \left| g_{K}^{\rm [s]} \right|}(t_{1}, \dots, t_{K})$ is the PDF of $\left| g_{1}^{\rm [s]} \right|, \dots, \left| g_{K}^{\rm [s]} \right|$ given in Theorem \ref{theorem_joint_PDF_Slow_Interference}. Finally, $(d)$ has changed the order of integration.
\begin{figure*}
\centering
\begin{align}\label{Appendix_B_1}
{\mathcal O}_{{\rm MRT}\text{-}s\text{-}{\rm FAMA}} 
&\overset{(a)}{=} \mathbb{P}\left\{ \left| g_{1}\right| < \Theta_{\rm [s]} \left| g_{1}^{\rm [s]} \right|, \dots, \left| g_{K}\right| < \Theta_{\rm [s]} \left| g_{K}^{\rm [s]} \right| \right\}  \nonumber \\
&\overset{(b)}{=}\int_{t_K = 0}^{\infty} \cdots \int_{t_1 = 0}^{\infty} F_{\left| g_{1}\right|, \dots, \left| g_{K}\right|}(\Theta_{\rm [s]} t_{1}, \dots, \Theta_{\rm [s]} t_{K}|t_{1},\dots,t_{K}) \times f_{\left| g_{1}^{\rm [s]} \right|, \dots, \left| g_{K}^{\rm [s]} \right|}(t_{1}, \dots, t_{K})dt_{1}\dots dt_{K}\nonumber \\
& \overset{(c)}{=} \int_{t_K = 0}^{\infty} \cdots \int_{t_1 = 0}^{\infty} \frac{2\omega^{\omega}}{\Gamma(\omega)\iota^{\omega}} \int_{t = 0}^{\infty}t^{2\omega-1}e^{-\frac{\omega t^2}{\iota}}\prod_{k=1}^{K}\left[1- Q_{\omega}\left(\sqrt{\frac{2\omega\mu^2t^2}{\iota(1-\mu^2)}}, \sqrt{\frac{2\omega\Theta_{\rm [s]}^2t_{k}^2}{\iota(1-\mu^2)}}\right)\right]dt\nonumber \\
&\quad\quad\quad \times \int_{t_{0}=0}^{\infty}\frac{2\Omega^{\Omega} t_{0}^{2\Omega-1}e^{-\frac{\Omega t_{0}^2}{\phi}}}{\Gamma(\Omega)\phi^{\Omega}}\prod_{k=1}^{K}\frac{\Omega t_{0}^{1-\Omega}e^{-\frac{\Omega\mu^2 t_{0}^2}{\phi(1-\mu^2)}}}{\phi (1-\mu^2)\mu^{\Omega-1}}2t_{k}^{\Omega}e^{-\frac{\Omega t_{k}^2}{\phi(1-\mu^2)}}I_{\Omega-1}\left(\frac{2\Omega\mu t_{0}t_{k}}{\phi(1-\mu^2)}\right)dt_{0}dt_{1}\dots dt_{K}\nonumber \\
&\overset{(d)}{=}\frac{2\omega^{\omega}}{\Gamma(\omega)\iota^{\omega}}\int_{t_0 = 0}^{\infty}\frac{2\Omega^{\Omega} t_{0}^{2\Omega-1}e^{-\frac{\Omega t_{0}^2}{\phi}}}{\Gamma(\Omega)\phi^{\Omega}} \int_{t = 0}^{\infty}t^{2\omega-1}e^{-\frac{\omega t^2}{\iota}} \prod_{k=1}^{K}\Bigg\{\frac{\Omega t_{0}^{1-\Omega}e^{-\frac{\Omega\mu^2 t_{0}^2}{\phi(1-\mu^2)}}}{\phi (1-\mu^2)\mu^{\Omega-1}}\nonumber \\
&\quad\quad\quad\times\int_{t_k = 0}^{\infty} 2t_{k}^{\Omega}e^{-\frac{\Omega t_{k}^2}{\phi(1-\mu^2)}}\left[1- Q_{\omega}\left(\sqrt{\frac{2\omega\mu^2t^2}{\iota(1-\mu^2)}}, \sqrt{\frac{2\omega\Theta_{\rm [s]}^2t_{k}^2}{\iota(1-\mu^2)}}\right)\right]I_{\Omega-1}\left(\frac{2\Omega\mu t_{0}t_{k}}{\phi(1-\mu^2)}\right)dt_{k}\Bigg\}dt dt_{0} 
\end{align}
\hrulefill
\end{figure*}

Based on \eqref{negative_Marcum_Q}, the integration over $t_{k}$ is expressed as
\begin{multline}\label{integration_Marcum_Bessel_2}
\int_{t_k = 0}^{\infty}2t_{k}^{\Omega}e^{-\frac{\Omega t_{k}^2}{\phi(1-\mu^2)}} Q_{1-\omega}\left(\sqrt{\frac{2\omega\Theta_{\rm s}^2t_{k}^2}{\iota(1-\mu^2)}}, \sqrt{\frac{2\omega\mu^2t^2}{\iota(1-\mu^2)}}\right)\\
\times I_{\Omega-1}\left(\frac{2\Omega\mu t_{0}t_{k}}{\phi(1-\mu^2)}\right)dt_{k}
\end{multline}
From \cite[Proposition 1]{integration_marcum_Q_bessel}, we have the integral-form expression of \eqref{integration_Marcum_Bessel_2}, given in \eqref{closed_form_integration_Marcum_Bessel_2}, as shown at the top of next page. 
\begin{figure*}
\begin{align}\label{closed_form_integration_Marcum_Bessel_2}
&  \int_{t_k = 0}^{\infty}2t_{k}^{\Omega}e^{-\frac{\Omega t_{k}^2}{\phi(1-\mu^2)}} Q_{1-\omega}\left(\sqrt{\frac{2\omega\Theta_{\rm s}^2t_{k}^2}{\iota(1-\mu^2)}}, \sqrt{\frac{2\omega\mu^2t^2}{\iota(1-\mu^2)}}\right) I_{\Omega-1}\left(\frac{2\Omega\mu t_{0}t_{k}}{\phi(1-\mu^2)}\right)dt_{k} \nonumber \\
&= \frac{\phi(1-\mu^2)}{\Omega}\mu^{\Omega-1}t_{0}^{\Omega-1}e^{\frac{\Omega\mu^2 t_{0}^2}{\phi(1-\mu^2)}}Q_{\Omega}\left( \sqrt{\frac{2\omega\Omega\mu^2\Theta_{\rm s}^2 t_{0}^2}{(1-\mu^2)(\iota\Omega+\omega\phi\Theta_{\rm s}^2)}}, \sqrt{\frac{2\omega\Omega\mu^2t^2}{(1-\mu^2)(\iota\Omega+\omega\phi\Theta_{\rm s}^2)}}\right)\nonumber \\
&\quad\quad\quad -\frac{\phi(1-\mu^2)}{\mu t_{0} \Omega} \left(\frac{\iota\mu\Omega t_{0} }{\iota\Omega+\omega\phi\Theta_{\rm s}^2}\right)^{b} \left( \frac{\mu t}{\Theta_{\rm s}} \right)^{1-\omega} e^{\frac{\iota \mu^2 \Omega^2t_{0}^2 - \phi \omega\Omega\mu^2  t^2}{\phi(1-\mu^2)(\iota\Omega+\omega\phi\Theta_{\rm s}^2)} }\sum_{q=0}^{b-1}\sum_{p=0}^{b-q-1}  \left(\frac{\phi t}{\Omega t_{0}} \sqrt{\frac{\omega (1-\mu^2)}{2\iota }} \right)^{q+p}  \nonumber \\
& \quad\quad\quad\quad\quad\quad \times \left(\frac{\iota\Omega+\omega\phi\Theta_{\rm s}^2}{\phi\Theta_{\rm s}}\sqrt{\frac{2}{ \iota\omega(1-\mu^2)}} \right)^q \left(\sqrt{\frac{2\omega\Theta_{\rm s}^2}{\iota(1-\mu^2)}} \right)^{p}\frac{(b-q-p)_{p}}{p!}I_{1-\omega+q+p}\left( \frac{2\omega\Omega\Theta_{\rm s}\mu^2 t t_{0}}{(1-\mu^2)(\iota\Omega+\omega\phi\Theta_{\rm s}^2)}\right)
\end{align}
\hrulefill
\end{figure*}

Finally, submitting \eqref{closed_form_integration_Marcum_Bessel_2} into \eqref{Appendix_B_1}, the outage probability of MRT-$s$-FAMA is expressed as \eqref{Outage_S_FAMA_Closedform}.

\subsection{Proof of Corollary \ref{Corollary_Out_s_FAMA}}\label{Proof_Corollary_Out_s_FAMA}
Similar to Appendix \ref{Proof_Corollary_Out_f_FAMA}, we first rewrite the modified Bessel function based on \eqref{integration_form_bessel}, which gives 
\begin{align}\label{Modified_Bessel_exp_S_FAMA}
&e^{-\frac{\omega\Omega\mu^2\Theta_{\rm s}^2t_{0}^2+\omega\Omega\mu^2  t^2}{(1-\mu^2)(\iota\Omega+\omega\phi\Theta_{\rm s}^2)} }I_{1-\omega+q+p}\left( \frac{2\omega\Omega\Theta_{\rm s}\mu^2 t t_{0}}{(1-\mu^2)(\iota\Omega+\omega\phi\Theta_{\rm s}^2)}\right) \nonumber \\
= &\,e^{-\frac{(\sqrt{\omega\Omega}\mu \Theta_{\rm s} t_{0}-\sqrt{\omega\Omega}\mu t)^2}{(1-\mu^2)(\iota\Omega+\omega\phi\Theta_{\rm s}^2)} }\nonumber\\
&\times\frac{1}{\pi}\int_{0}^{\pi} \cos{(\theta(1-\omega+q+p))}e^{- \frac{4\omega\Omega\Theta_{\rm s}\mu^2tt_{0}}{(1-\mu^2)(\iota\Omega+\omega\phi\Theta_{\rm s}^2)}\sin^2{\frac{\theta}{2}}}d\theta.
\end{align}
Submitting \eqref{Modified_Bessel_exp_S_FAMA} into \eqref{Outage_S_FAMA_Closedform} finally yields \eqref{Outage_S_FAMA_No_Besel}.
			
\bibliographystyle{ieeetr}
%\bibliography{Reference_List}

\end{document}